%%
%% This is file `sample-sigconf.tex',
%% generated with the docstrip utility.
%%
%% The original source files were:
%%
%% samples.dtx  (with options: `all,proceedings,bibtex,sigconf')
%% 
%% IMPORTANT NOTICE:
%% 
%% For the copyright see the source file.
%% 
%% Any modified versions of this file must be renamed
%% with new filenames distinct from sample-sigconf.tex.
%% 
%% For distribution of the original source see the terms
%% for copying and modification in the file samples.dtx.
%% 
%% This generated file may be distributed as long as the
%% original source files, as listed above, are part of the
%% same distribution. (The sources need not necessarily be
%% in the same archive or directory.)
%%
%%
%% Commands for TeXCount
%TC:macro \cite [option:text,text]
%TC:macro \citep [option:text,text]
%TC:macro \citet [option:text,text]
%TC:envir table 0 1
%TC:envir table* 0 1
%TC:envir tabular [ignore] word
%TC:envir displaymath 0 word
%TC:envir math 0 word
%TC:envir comment 0 0
%%
%% The first command in your LaTeX source must be the \documentclass
%% command.
%%
%% For submission and review of your manuscript please change the
%% command to \documentclass[manuscript, screen, review]{acmart}.
%%
%% When submitting camera ready or to TAPS, please change the command
%% to \documentclass[sigconf]{acmart} or whichever template is required
%% for your publication.
%%
%%
\documentclass[sigconf]{acmart}
% \usepackage[style=ACM-Reference-Format,backend=bibtex,sorting=none]{biblatex}
% \addbibresource{ref.bib}
%%
% \settopmatter{printacmref=false}
% \setcopyright{none}
% \renewcommand\footnotetextcopyrightpermission[1]{}
% \pagestyle{plain}
% \makeatletter
% \renewcommand\@formatdoi[1]{\ignorespaces}
% \makeatother

\usepackage{bbm} % for \mathbbm{1}
\usepackage{algorithm}
\usepackage{url}
\usepackage{algpseudocode}
\def\BSTATE{\STATE\hskip-\ALG@thistlm}
\makeatother

\usepackage[caption=false,font=footnotesize]{subfig}
% Bold lowercase: syntax \nb# where # is {a ... z, 0,1}

\def\nb0{{\mathbf{0}}}
\def\nb1{{\mathbf{1}}}

% Bold capital letters: syntax \nb# where # is {A ... Z}

% \mathcal: syntax \ncal# where # is {A ... Z}

% \mathbb: syntax \nbb# where # is {A ... Z}

% \mathfrak:

% Roman: {\rm } syntax \nrm# where # is {a ... z}

% Special symbols

% Theorems etc.
\newtheorem{lemma}{Lemma}

\newtheorem{theorem}{Theorem}

\newtheorem{remark}{Remark}

%%%%%%%% Backwards compatibility

%
\def\E{\mathbb{E}}

\def\P{\mathbb{P}}

   % rate coverage

							% Threshold = \beta_i
			% Signal to interference plus noise ratio

\def\g{\left.\right|}

% Fading

%% Symbols changed
% \def\i{\mathbf{1}}					% changed to \nb1
% \def\d{\mathrm{d}}					% changed to \nrmd
% \def\L{\mathcal{L}}					% changed to \ncalL
% \begin{definition}					% changed to \begin{ndef}

% \l also gives problems. Use \ell after defining it if needed.

%% D2D def

% \def\yj{y_{jx}}

%% 

%% fading

%% \BibTeX command to typeset BibTeX logo in the docs
\AtBeginDocument{%
  }

% Rights management information.  This information is sent to you
% when you complete the rights form.  These commands have SAMPLE
% values in them; it is your responsibility as an author to replace
% the commands and values with those provided to you when you
%% complete the rights form.
\copyrightyear{2025}
\acmYear{2025}
\setcopyright{rightsretained}
\acmConference[MobiHoc '25]{The Twenty-sixth International Symposium on
Theory, Algorithmic Foundations, and Protocol Design for Mobile Networks and
Mobile Computing}{October 27--30, 2025}{Houston, TX, USA}
\acmBooktitle{The Twenty-sixth International Symposium on Theory,
Algorithmic Foundations, and Protocol Design for Mobile Networks and Mobile
Computing (MobiHoc '25), October 27--30, 2025, Houston, TX, USA}
\acmDOI{10.1145/3704413.3764460}
\acmISBN{979-8-4007-1353-8/2025/10}

%%
%% Submission ID.
%% Use this when submitting an article to a sponsored event. You'll
%% receive a unique submission ID from the organizers
%% of the event, and this ID should be used as the parameter to this command.
%%\acmSubmissionID{123-A56-BU3}

%%
%% For managing citations, it is recommended to use bibliography
%% files in BibTeX format.
%%
%% You can then either use BibTeX with the ACM-Reference-Format style,
%% or BibLaTeX with the acmnumeric or acmauthoryear sytles, that include
%% support for advanced citation of software artefact from the
%% biblatex-software package, also separately available on CTAN.
%%
%% Look at the sample-*-biblatex.tex files for templates showcasing
%% the biblatex styles.
%%

%%
%% The majority of ACM publications use numbered citations and
%% references.  The command \citestyle{authoryear} switches to the
%% "author year" style.
%%
%% If you are preparing content for an event
%% sponsored by ACM SIGGRAPH, you must use the "author year" style of
%% citations and references.
%% Uncommenting
%% the next command will enable that style.
%%\citestyle{acmauthoryear}

%%
%% end of the preamble, start of the body of the document source.
\begin{document}

%%
%% The "title" command has an optional parameter,
%% allowing the author to define a "short title" to be used in page headers.
\title{Online Learning for Optimizing AoI-Energy Tradeoff under Unknown Channel Statistics}

%%
%% The "author" command and its associated commands are used to define
%% the authors and their affiliations.
%% Of note is the shared affiliation of the first two authors, and the
%% "authornote" and "authornotemark" commands
%% used to denote shared contribution to the research.
\author{Mohamed A. Abd-Elmagid}
\affiliation{%
  \institution{Department of ECE \\
  The Ohio State University}
  \city{Columbus}
  \state{OH}
  \country{USA}
}
\email{abd-elmagid.1@osu.edu}

\author{Ming Shi}
\affiliation{%
  \institution{Department of EE\\ University at Buffalo}
  \city{Buffalo}
  \state{NY}
  \country{USA}}
\email{mshi24@buffalo.edu}

\author{Eylem Ekici}
\affiliation{%
  \institution{Department of ECE \\
  The Ohio State University}
  \city{Columbus}
  \state{OH}
  \country{USA}
}
\email{ekici.2@osu.edu}

\author{Ness B. Shroff}
\affiliation{%
  \institution{Departments of ECE and CSE \\
  The Ohio State University}
  \city{Columbus}
  \state{OH}
  \country{USA}
}
\email{shroff.11@osu.edu}

% \author{Clement Kam}
% \affiliation{%
%   \institution{Information Technology Division\\
%   U.S. Naval Research Laboratory}
%   \city{Washington}
%   \state{DC}
%   \country{USA}}
%   \email{clement.k.kam.civ@us.navy.mil}

%%
%% By default, the full list of authors will be used in the page
%% headers. Often, this list is too long, and will overlap
%% other information printed in the page headers. This command allows
%% the author to define a more concise list
%% of authors' names for this purpose.
\renewcommand{\shortauthors}{Mohamed A. Abd-Elmagid, Ming Shi, Eylem Ekici, and Ness B. Shroff}

%%
%% The abstract is a short summary of the work to be presented in the
%% article.
\begin{abstract}
 We consider a real-time monitoring system where a source node (with energy limitations) aims to keep the information status at a destination node as fresh as possible by scheduling status update transmissions over a set of channels. The freshness of information at the destination node is measured in terms of the Age of Information (AoI) metric. In this setting, a natural tradeoff exists between the transmission cost (or equivalently, energy consumption) of the source and the achievable AoI performance at the destination. This tradeoff has been optimized in the existing literature under the assumption of having a complete knowledge of the channel statistics. In this work, we develop online learning-based algorithms with finite-time guarantees that optimize this tradeoff in the practical scenario where the channel statistics are unknown to the scheduler. In particular, when the channel statistics are known, the optimal scheduling policy is first proven to have a threshold-based structure with respect to the value of AoI (i.e., it is optimal to drop updates when the AoI value is below some threshold). This key insight was then utilized to develop the proposed learning algorithms that surprisingly achieve an order-optimal regret (i.e., $O(1)$) with respect to the time horizon length.
\end{abstract}

%
% The code below is generated by the tool at http://dl.acm.org/ccs.cfm.
% Please copy and paste the code instead of the example below.
%
\begin{CCSXML}
<ccs2012>
<concept>
<concept_id>10003033.10003079.10011672</concept_id>
<concept_desc>Networks~Network performance analysis</concept_desc>
<concept_significance>500</concept_significance>
</concept>
</ccs2012>
\end{CCSXML}
\ccsdesc[500]{Networks~Network performance evaluation}
\ccsdesc[300]{Networks~Network performance analysis}

%%
%% Keywords. The author(s) should pick words that accurately describe
%% the work being presented. Separate the keywords with commas.
\keywords{Age of information, communication networks, online learning.}
%% A "teaser" image appears between the author and affiliation
%% information and the body of the document, and typically spans the
%% page.
% \begin{teaserfigure}
%   \includegraphics[width=\textwidth]{sampleteaser}
%   \caption{Seattle Mariners at Spring Training, 2010.}
%   \Description{Enjoying the baseball game from the third-base
%   seats. Ichiro Suzuki preparing to bat.}
%   \label{fig:teaser}
% \end{teaserfigure}

% \received{20 February 2007}
% \received[revised]{12 March 2009}
% \received[accepted]{5 June 2009}

%%
%% This command processes the author and affiliation and title
%% information and builds the first part of the formatted document.
\maketitle

\section{Introduction} \label{sec:intro}
Timely delivery of real-time status updates is necessary for many critical and emerging applications including healthcare, factory automation, intelligent transportation systems, and smart homes, to name a few. A typical real-time status update system
consists of an energy-constrained source node (e.g., a small sensor) that generates status updates about some physical process of interest, and then sends them through a communication system to a destination node. Clearly, excessive transmissions of status updates can maintain the freshness of information available at the destination at the price of quickly exhausting the limited energy available at the source. Therefore, there exists a natural tradeoff between maintaining the freshness of information available at the destination and the transmission cost (or energy cost) of the source. Scheduling the transmissions of status updates to optimize this tradeoff is challenging especially since in practice the statistics of the channels between the source and destination nodes are often unknown to the scheduler. In this paper, we address this open problem by developing novel online learning-based scheduling algorithms with provable guarantees. 

We employ AoI as a metric to quantify the freshness of information at the destination about the process observed by the source. Specifically, AoI is defined as the time elapsed since the last successfully received status update at the destination was generated at the source \cite{kaul2012real}. There have been two main research directions in the AoI research area. The first direction aimed to analyze/characterize AoI in different queueing-theoretic models/disciplines, and the second direction was focused on the optimization of AoI in different communication systems that deal with time-sensitive information. Interested readers are advised to refer to \cite{pappas2023age} for a comprehensive book and \cite{roy_survey} for a recent survey. Since this paper belongs to the second research direction, we next discuss the most closely-relevant prior optimization-based studies on AoI.

The scheduling problem to minimize AoI in single-hop wireless networks with unreliable channels was studied in \cite{kadota2018scheduling,kadota2018optimizing,hsu2018age,hsu2019scheduling,tripathi2019whittle,sun2019closed}. In particular, the problem was formulated as a restless multi-armed bandit (MAB) problem for which Whittle Index-based scheduling policies were developed. A common assumption considered in \cite{kadota2018scheduling,kadota2018optimizing,hsu2018age,hsu2019scheduling,tripathi2019whittle,sun2019closed} was that the statistics of the channels and/or the channel state information are known to the scheduler. Also, none of these studies accounted for the energy limitations at the source node(s). Further, these prior works have mostly been focused on the study of the infinite horizon model, whereas we develop in this paper scheduling algorithms with provable finite-time guarantees.

For the case when the channel statistics are unknown to the scheduler, online learning-based scheduling algorithms with provable finite-time guarantees were developed in \cite{fatale2021regret,juneja2021correlated,prasad2021decentralized,atay2021aging}. The authors of \cite{fatale2021regret} considered a system setting where the source is connected to the destination through a set of unreliable channels (i.e., each channel is associated with a different reliability or successful transmission probability). The study in \cite{fatale2021regret} was extended in \cite{juneja2021correlated} to the case of having correlated unreliable channels, in \cite{prasad2021decentralized} to the multi-source setting where each source generates a status update every time slot, and in \cite{atay2021aging} to the multi-source setting with random status update arrivals at different sources. The scheduling of status updates over different channels was formulated as a multi-armed bandit problem in \cite{fatale2021regret,juneja2021correlated,prasad2021decentralized,atay2021aging} where each channel corresponds to one arm. The reward obtained from selecting one arm in some time slot is a function of the reliability of that arm and the AoI value at the beginning of that time slot (without accounting for the transmission cost of sending status updates). The regret of UCB \cite{auer2002finite} and Q-UCB \cite{krishnasamy2016regret} algorithms (with respect to the optimal policy that knows the channel statistics a priori) were shown in \cite{fatale2021regret} to scale as $O({\rm log} T)$ and $O({\rm log}^3 T)$, respectively, where $T$ is the time horizon length. The authors of \cite{prasad2021decentralized} developed a UCB-based distributed learning algorithm that scales as $O(\rm{log}^2 T)$, and the authors of \cite{atay2021aging} utilized the knowledge about the system being empty (i.e., there are no status updates to transmit) or not to develop a learning algorithm that achieves a bounded regret with respect to $T$ (i.e., $O(1)$) when the arrival rates at different sources are relatively small.

A key distinction between \cite{fatale2021regret,juneja2021correlated,prasad2021decentralized,atay2021aging} and this paper is the structure of the optimal policy to which the proposed learning algorithms are compared (to obtain the regret). In particular, the optimal policy for the settings studied in \cite{fatale2021regret,juneja2021correlated,prasad2021decentralized,atay2021aging} is to simply send a status update over the channel with the highest successful transmission probability every time slot (whenever the system is not empty), and hence the scheduling problem could be formulated as a multi-armed bandit problem. Since this paper accounts for the transmission costs of sending status updates over different channels, the simple structure of the optimal policy in \cite{fatale2021regret,juneja2021correlated,prasad2021decentralized,atay2021aging} does not hold here anymore. In particular, it may be optimal in our setting to remain idle in some time slots (and drop the generated status updates). Thus, the decision of sending a status update should also depend on the AoI value, and hence the multi-armed bandit problem formulation in \cite{fatale2021regret,juneja2021correlated,prasad2021decentralized,atay2021aging} is not sufficient to study the scheduling problem considered in this paper. This key difference between the structures of the optimal policies has significant impact on the development of the learning algorithms in this paper and makes the regret analysis much more challenging. Before going into more details about our contributions, it is instructive to note that scheduling problems to jointly optimize AoI and transmission cost or other costs have been studied in a variety of settings \cite{tseng2019online,fountoulakis2020optimal,bedewy2021optimal,liu2023toward,saurav2021minimizing,abd2019tcom,tripathi2021online,liu2024learning}. However, none of these works considered that the channel statistics are unknown to the scheduler, and most of them were focused on the study of the infinite horizon model.

{\it Contributions.} This paper presents novel online learning-based scheduling algorithms with provable finite-time guarantees to optimize AoI for energy-constrained communications under unknown channel statistics. In particular, we study a system setting in which an energy-constrained source node is connected to a destination node through a set of channels, where the channel statistics are assumed to be unknown to the scheduler. Towards developing AoI-aware online learning-based algorithms for this setting, we first analyze the structure of the optimal policy (that knows the channel statistics a priori) for the infinite time average-cost problem. Specifically, this optimal policy is proven to have a threshold-based structure with respect to the value of AoI (i.e., it is optimal to drop updates when the AoI is below some threshold). This key insight is then utilized to develop the proposed learning algorithms for the finite-time horizon model under consideration. 
%In particular, when the action is to drop the generated status update in a certain time slot (i.e., the channels are idle), it would be useful to utilize that slot for exploring the status of one of the channels at a negligible power cost (by sending a pilot signal). 
Our proposed AoI-aware learning algorithms (with and without an exploration bonus) are proven to surprisingly have a bounded regret performance with respect to the time horizon length (i.e., $O(1)$). Extensive simulations are conducted to show the impact of different system design parameters on the empirical performance of the proposed learning algorithms. \textit{To the best of our knowledge, this paper makes the first attempt towards developing AoI-aware learning algorithms with a provable order-optimal regret performance for optimizing the fundamental AoI-energy tradeoff.}

\section{System Model and Problem Statement}\label{sec:Model}
\subsection{Network Model}
We consider a real-time monitoring system where a source node is connected to a destination node through $C$ communication channels ($C_i$ denotes the $i$-th channel). Without loss of generality, we consider a discrete time finite horizon composed of $T$ slots of
unit length. Hence, the terms power and energy are used interchangeably throughout the paper. At the beginning of each time slot, the source generates a fresh status update, and either transmits it to the destination using one of the channels or drops it. A power cost $P$ is associated with each transmission attempt over any of the channels, and the transmission power cost of time slot $t$ is denoted by $P(t)$. Note that $P(t)$ is equal to zero when the status update generated at the beginning of time slot $t$ is dropped, and is equal to $P$ otherwise. The freshness of information at the destination node is measured using the AoI metric. In particular, the AoI measures the time elapsed since the generation time of the latest successfully received status update at the destination node. Let $A(t)$ denote the AoI value at the beginning of time slot $t$. Without loss of generality, we assume that $A(t)$ is upper bounded by a finite value $A_{\rm m}$ which can be chosen to be arbitrarily large. When
$A(t)$ reaches $A_{\rm m}$, it means that the available information at the destination node is too stale to be of
any use. A status update transmission over channel $C_i$ is successful with probability $\mu_i$, independent of all other channels and
across time slots. The values of $\{\mu_i\}$ are assumed to be unknown to the scheduler. The total cost of time slot $t$ is defined as 
\begin{align}\label{cost_slot}
    C(t) = \alpha A(t) + (1 - \alpha) P(t),
\end{align}
where $\alpha \in [0,1]$. Our intention behind using a weighted cost function \footnote{Note that the results obtained in this paper (for the structure of the optimal policy in Section \ref{sec:known} as well as the regret bounds in Section \ref{sec:learningopt}) using the linear age cost function in (\ref{cost_slot}) can be readily extended to the case of having a non-decreasing age function $\mathcal{F}(A(t))$, i.e., $C(t) = \alpha \mathcal{F}(A(t)) + (1 - \alpha) P(t)$. In Section \ref{sec:learningopt}, we also provide numerical results demonstrating the bounded regret performance of our proposed order-optimal learning algorithm for a non-linear age cost function.} is to provide a generic problem formulation that allows the scheduler to set the importance weights of AoI and power consumption in the optimization problem. 
%Clearly, $\alpha$ and $1 - \alphs$ represent here the importance weights for AoI and power consumption, respectively.

\textit{State and action spaces}. At the beginning of time slot $t$, the state of the system $s(t)$ is represented by the AoI value $A(t)$, i.e., $s(t) = A(t) \in \mathcal{S} = \{1, 2, \cdots , A_{\rm m}\}$. Based on the state $s(t)$, the action taken in slot $t$ is given by $a(t) \in \mathcal{A} = \{0, 1, \cdots, C\}$. In particular, when $a(t) = 0$, the status update generated by the source at the beginning of slot $t$ is dropped, and $A(t + 1) = {\rm min}\left \{A_{\rm m}, A(t)+1 \right \}$. On the other hand, when $a(t) = i > 0$, the generated status update is transmitted over channel $C_i$. Further, $A(t + 1)$ is given by
\begin{align}\label{eq:AoI_evol}
A(t + 1) = \begin{cases}
\begin{aligned}
&1,\; &&\text{with probability}\; \mu_i,\\
&{\rm min}\left \{A_{\rm m}, A(t)+1 \right \},\; &&\text{with probability}\; 1 - \mu_i.
\end{aligned}
\end{cases}
\end{align}

Based on the above definitions, the total cost of time slot $t$ in (\ref{cost_slot}) can be expressed as
\begin{align}\label{cost_reexpressed}
    C(s(t),a(t)) = \alpha s(t) + (1 - \alpha) P \mathbbm{1}\left(a\left(t\right) \neq 0\right),
\end{align}
where $\mathbbm{1}(\cdot)$ is the indicator function.

\subsection{Problem Statement}
A policy $\pi = \{\pi_1, \pi_2, \cdots, \pi_{T}\}$ is a sequence of mappings from the state space to the action space over different time slots, i.e., $\pi_i : \mathcal{S} \rightarrow \mathcal{A}, \forall i$. We also use $\pi^{*} = \{\pi^*_1, \pi^*_2, \cdots, \pi^*_{T}\}$ to refer to the optimal policy, which has a complete knowledge of the statistics of the channels (or the probabilities $\{\mu_i\}$) a priori. In absence of any knowledge about $\{\mu_i\}$, the accumulated cost over $T$ time slots under a policy $\pi$ starting from state $s$ is given by 
\begin{align}\label{cost_accum}
C(\pi,s,T) = \sum_{t=1}^{T}{C(t)}.
\end{align}

%Further, the expected long-term average cost under policy $\pi$ can be expressed as 
%\begin{align}\label{cost_average}
%\rho(\pi,s) \!=\! \lim_{T\to \infty} \frac{1}{T} \mathbb{E}[C(\pi,s,T)].
%\end{align}
 The total regret of a policy $\pi$ with respect to $\pi^*$ after $T$ time slots is defined as 
\begin{align}\label{reg}
R^{\pi}(T) = \mathbb{E}[C(\pi,s,T)] - \mathbb{E}[C(\pi^*,s,T)],
\end{align}
%where $\rho^* = \underset{\pi}{\rm min}\; \rho(\pi,s)$ is the optimal value of the long-term average cost when the probabilities of successful transmission over different channels $\{\mu_i\}$ are known. The objective is to develop a learning algorithm $\mathcal{M}$ that achieves a tight upper bound on the total regret in (\ref{reg}).
where the expectation is taken with respect to the statistics of the channels. Our goal is to develop a learning algorithm which determines $\pi$ such that a tight upper bound on the total regret in (\ref{reg}) is obtained.

\section{The Case When the Statistics of the Channels Are Known}\label{sec:known}
The first step towards developing a learning algorithm that achieves a tight upper bound on the total regret in (\ref{reg}) is to understand the structure of the optimal policy $\pi^*$ (that has a complete knowledge of the successful transmission probabilities over different channels $\{\mu_i\}$). Although $\pi^*$ can be evaluated using the standard backward induction algorithm, it is not possible to obtain analytical insights about its structure in the finite time horizon problem under consideration. Because of that, we first characterize the structure of $\pi^*$ for the infinite time average-cost problem in this section, and then utilize the obtained insights to develop the learning algorithms for the finite time horizon problem (when the probabilities $\{\mu_i\}$ are unknown) in the next section. Specifically, the expected long-term average cost under policy $\pi$ can be expressed as 
\begin{align}\label{cost_average}
\rho(\pi,s) \!=\! \lim_{T\to \infty} \frac{1}{T} \mathbb{E}[C(\pi,s,T)].
\end{align}

Due to the nature of the evolution of AoI (as described by (\ref{eq:AoI_evol})), and the independence of channel statistics over time, the problem can be modeled as an infinite horizon average-cost
MDP with finite state and action spaces $\mathcal{S}$ and $\mathcal{A}$, respectively. Since there exists an optimal stationary deterministic policy (minimizing $\rho(\pi,s)$) for solving MDPs with finite state and action spaces \cite{bertsekas2011dynamic}, we aim at investigating the structure of that stationary deterministic policy in the sequel.

\begin{lemma}\label{Lemma:1}
 The stationary deterministic optimal policy $\pi^\star$ can be evaluated by solving the following Bellman's equations for average-cost MDPs \cite{bertsekas2011dynamic}:
\begin{align}\label{belman_equation}
\rho^* + V(s) = \underset{a \in \mathcal{A}}{\rm min}\; Q(s,a), s \in \mathcal{S},
\end{align}
where $\rho^* = \underset{\pi}{\rm min}\; \rho(\pi,s)$, $V(s)$ is the value function, and $Q(s,a)$ is the $Q$-function $\big($also referred to as the $Q$-factors, $\forall s \in \mathcal{S}$ and $a \in \mathcal{A}\big)$, which is the expected cost resulting from taking action $a$ in state $s$, i.e.,
\begin{align}\label{Q_func}
Q(s,a) = \alpha s + (1 - \alpha) P \mathbbm{1}\left(a \neq 0\right) + \sum\limits_{s' \in \mathcal{S}} {\P(s' \g s, a) V(s')},
\end{align}
where $\P(s' \g s, a)$ is the transition probability of moving from state $s$ to state $s'$ as a result of taking action $a$, which can be evaluated from (\ref{eq:AoI_evol}) as
\begin{align}\label{trans_prob}
 \P(s' \g s, a) = \begin{cases}
\begin{aligned}
&1,\; && s' = s^{+}\;{\rm and}\;a=0,\\
& 1 - \mu_i,\; &&s' = s^{+}\;{\rm and}\;a=i > 0,\\
&\mu_i,\; &&s' = 1\;{\rm and}\;a= i > 0,\\
&0,\; &&{\rm otherwise},
\end{aligned}
\end{cases}
\end{align}
where $s^{+} = {\rm min}\left \{A_{\rm m}, s+1 \right \}$. In addition, the optimal action taken at state $s$ is given by
\begin{align}\label{eq:optimal_policy}
\pi^*(s) = {\rm arg} \underset{a \in \mathcal{A}}{\rm min}\; Q(s,a).
\end{align}
\end{lemma}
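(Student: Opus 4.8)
The plan is to recognize Lemma~\ref{Lemma:1} as a specialization of the classical theory of average-cost Markov decision processes to the AoI dynamics in (\ref{eq:AoI_evol}). Concretely, I would proceed in two parts: first, certify that the problem lies in a class of finite MDPs for which the average-cost optimality equation (ACOE) admits a solution with a \emph{state-independent} optimal gain $\rho^{*}$ and for which the greedy stationary policy is average-optimal; second, read off the explicit form of the $Q$-factors in (\ref{Q_func}) directly from the transition kernel (\ref{trans_prob}).

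For the first part, I would verify the two standard hypotheses. (i) The per-slot cost is uniformly bounded: $0 \le C(s,a) \le \alpha A_{\rm m} + (1-\alpha)P$ by (\ref{cost_reexpressed}). (ii) The MDP is communicating whenever at least one channel is usable (i.e.\ $\mu_i>0$ for some $i$): from any state $s$, the action $a=0$ moves the AoI deterministically to $s^{+}={\rm min}\{A_{\rm m},s+1\}$, so starting from state $1$ and dropping for $k-1$ consecutive slots reaches any state $k\le A_{\rm m}$; conversely, from any state $s$, transmitting over a channel with $\mu_i>0$ reaches state $1$ with positive probability. Hence every ordered pair of states is connected under a suitable stationary policy. (In the degenerate case $\mu_i=0$ for all $i$, transmissions never succeed, every policy drives the AoI to $A_{\rm m}$, and the induced chain is unichain with recurrent class $\{A_{\rm m}\}$; the conclusion below is unaffected.) By the classical results for finite communicating/unichain average-cost MDPs \cite{bertsekas2011dynamic}, there then exist a scalar $\rho^{*}$ and a value function $V:\mathcal{S}\to\R$ satisfying (\ref{belman_equation}), with $\rho^{*}=\min_{\pi}\rho(\pi,s)$ independent of the initial state $s$, and every stationary deterministic policy that selects at each $s$ an action attaining the minimum on the right-hand side of (\ref{belman_equation}) is average-cost optimal --- which is precisely $\pi^{*}$ as defined in (\ref{eq:optimal_policy}).

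For the second part, I would substitute (\ref{trans_prob}) into $Q(s,a)=\alpha s+(1-\alpha)P\,\mathbbm{1}(a\neq 0)+\sum_{s'\in\mathcal{S}}\P(s'\g s,a)V(s')$. For $a=0$ the only reachable successor is $s^{+}$, so $Q(s,0)=\alpha s+V(s^{+})$; for $a=i>0$ the successor is $1$ with probability $\mu_i$ and $s^{+}$ with probability $1-\mu_i$, so $Q(s,i)=\alpha s+(1-\alpha)P+\mu_i V(1)+(1-\mu_i)V(s^{+})$. Both cases are subsumed by the compact expression (\ref{Q_func}), which completes the identification.

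The only real obstacle is the first part: ensuring that the \emph{single}-equation ACOE (with a state-independent $\rho^{*}$) is the correct object here rather than the multichain version --- which is exactly what the communicating/unichain property established above guarantees --- and checking the corner cases $\mu_i\in\{0,1\}$ (a $\mu_i=1$ channel can make state $1$ absorbing under a particular policy, so one should argue communicability at the level of \emph{some} policy, not every policy). None of this is deep; the substantive structural work --- showing $\pi^{*}$ has a threshold form in the AoI value --- is carried out in the subsequent results.
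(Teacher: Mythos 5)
Your proposal is correct and follows essentially the same route as the paper, which states Lemma~\ref{Lemma:1} as a direct consequence of the classical average-cost MDP theory in \cite{bertsekas2011dynamic} and simply notes that the weak accessibility condition holds. Your explicit verification of accessibility (drop actions to climb the AoI ladder, a positive-$\mu_i$ transmission to reset to state $1$) and your treatment of the corner cases fill in details the paper leaves implicit, but do not constitute a different argument.
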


Since the weak accessibility condition holds for our problem, a solution for the Bellman's equations in Lemma \ref{Lemma:1} is
guaranteed to exist \cite{bertsekas2011dynamic}. We will now analytically characterize the structure of the stationary deterministic optimal policy $\pi^*$ using the Value Iteration Algorithm (VIA). According to the VIA, the value function $V(s)$ can be evaluated iteratively such that $V(s)$ at iteration $n$, $n = 1, 2, \cdots$, is computed as 
\begin{align}\label{value_function_itern}
 \nonumber& V(s)^{(n)}= \underset{a \in \mathcal{A}}{\rm min}\; Q(s,a)^{(n - 1)}, \\
 &= \underset{a \in \mathcal{A}}{\rm min} \alpha s + (1 - \alpha) P \mathbbm{1}\left(a \neq 0\right) + \sum\limits_{s' \in \mathcal{S}} {\P(s' \g s, a) V(s')^{(n - 1)}},
\end{align}
where $s \in \mathcal{S}$. Hence, the optimal policy at iteration $n$ is given by
\begin{align}\label{policy_itern}
\pi^{*(n)}(s) = {\rm arg} \underset{a \in \mathcal{S}}{\rm min}\; Q(s,a)^{(n - 1)}.
\end{align}

As per the VIA, under any initialization of the value function $V(s)^{(0)}$, the sequence $\left \{ V(s)^{(n)}\right\}$ converges to $V(s)$ which satisfies the Bellman's equation in (\ref{belman_equation}), i.e.,
\begin{align}\label{conv}
\underset{n \rightarrow \infty}{\rm lim} V(s)^{(n)} = V(s).
\end{align}

Based on the VIA, the following Lemma characterizes the monotonicity property of the value function with respect to the system state. 
\begin{lemma}\label{Lemma:2}
The value function $V(s)$, satisfying the Bellman's equation in (\ref{belman_equation}) and corresponding to the optimal policy $\pi^*$, is non-decreasing with respect to $s$.
\end{lemma}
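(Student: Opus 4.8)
The plan is to establish monotonicity of $V(\cdot)$ by running it through the Value Iteration Algorithm: I would show by induction on the iteration index $n$ that every iterate $V(\cdot)^{(n)}$ defined in (\ref{value_function_itern}) is non-decreasing on $\mathcal{S}$, and then transfer the property to the limit $V(\cdot)$ using the convergence in (\ref{conv}).

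For the base case I would take the initialization $V(s)^{(0)} = 0$ for all $s \in \mathcal{S}$, which is trivially non-decreasing. For the inductive step, suppose $V(\cdot)^{(n-1)}$ is non-decreasing and fix $s_1 \le s_2$ in $\mathcal{S}$. The structural fact that makes everything go through is that the successor map $s \mapsto s^{+} = \min\{A_{\rm m}, s+1\}$ is itself non-decreasing, so $s_1^{+} \le s_2^{+}$ and hence $V(s_1^{+})^{(n-1)} \le V(s_2^{+})^{(n-1)}$ by the induction hypothesis. I would then inspect the $Q$-factors in (\ref{value_function_itern}) action by action: for $a=0$ one has $Q(s,0)^{(n-1)} = \alpha s + V(s^{+})^{(n-1)}$, and for $a=i>0$ one has $Q(s,i)^{(n-1)} = \alpha s + (1-\alpha)P + (1-\mu_i)\,V(s^{+})^{(n-1)} + \mu_i\,V(1)^{(n-1)}$; in both cases the dependence on $s$ enters only through the non-decreasing quantities $\alpha s$ and $V(s^{+})^{(n-1)}$, while the reset contribution $\mu_i V(1)^{(n-1)}$ does not depend on $s$ (by (\ref{trans_prob}), a successful transmission lands in state $1$ regardless of the current state). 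Consequently $Q(s_1,a)^{(n-1)} \le Q(s_2,a)^{(n-1)}$ for every $a \in \mathcal{A}$, and since the action set $\mathcal{A}$ is common to all states, this inequality is preserved under the minimization over $a$: choosing $a^{\star} \in \arg\min_{a\in\mathcal{A}} Q(s_2,a)^{(n-1)}$ gives $V(s_1)^{(n)} \le Q(s_1,a^{\star})^{(n-1)} \le Q(s_2,a^{\star})^{(n-1)} = V(s_2)^{(n)}$, which closes the induction.

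To finish, I would invoke (\ref{conv}): since $V(s) = \lim_{n\to\infty} V(s)^{(n)}$ pointwise and a pointwise limit of non-decreasing functions is non-decreasing, $s_1 \le s_2$ implies $V(s_1) \le V(s_2)$, which is the assertion of the Lemma. I do not expect a serious obstacle here — the argument is essentially bookkeeping — but the two points that warrant attention are (i) confirming that the transition kernel (\ref{trans_prob}) routes all of the state dependence through the single monotone map $s \mapsto s^{+}$ (the reset outcome being state $1$ uniformly in $s$), and (ii) the fact that minimizing over a state-independent action set preserves monotonicity, which is the one step that would fail if the set of admissible actions shrank as $s$ increased. Neither is problematic in the present model.
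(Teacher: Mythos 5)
Your proposal is correct and follows essentially the same route as the paper's proof: induction on the value-iteration iterates, using the monotonicity of $s \mapsto s^{+}$ and the state-independence of the reset to state $1$, and comparing $V(s_1)^{(n)}$ to $V(s_2)^{(n)}$ via the action that is optimal at the larger state $s_2$. Your explicit choice of the all-zero initialization is a slightly cleaner handling of the base case than the paper's ``arbitrary initialization'' remark, but the argument is otherwise the same.
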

\begin{proof}
Consider two states $s_1$ and $s_2$ such that $s_1 \leq s_2$. Hence, the objective is to show that $V(s_1) \leq V(s_2)$. According to (\ref{conv}), it is then sufficient to show that $V(s_1)^{(n)} \leq V(s_2)^{(n)}, \forall n$, which we prove using mathematical induction. Particularly, the relation holds by construction for $n = 0$ since it corresponds to the initial values for the value function which can be chosen arbitrary. Now, we assume that $V(s_1)^{(n)} \leq V(s_2)^{(n)}$ holds for some $n$, and then show that it holds for $V(s_1)^{(n + 1)} \leq V(s_2)^{(n + 1)}$ as well. Particularly, according to (\ref{value_function_itern}) and (\ref{policy_itern}), $V(s_1)^{(n + 1)}$ and  $V(s_2)^{(n + 1)}$ can be respectively expressed as 
\begin{align}\label{v_s1_n+1}
\nonumber V(s_1)^{(n + 1)} & = \alpha s_1 + (1 - \alpha) P \mathbbm{1}\left(\pi^{* (n)} (s_1) \neq 0\right) \\ \nonumber&+ \sum\limits_{s' \in \mathcal{S}} {\P(s' \g s_1, \pi^{* (n)} (s_1)) V(s')^{(n)}}, \\
 \nonumber& \overset{\rm (a)}{\leq} \alpha s_1 + (1 - \alpha) P \mathbbm{1}\left(\pi^{* (n)} (s_2) \neq 0\right) \\&+ \sum\limits_{s' \in \mathcal{S}} {\P(s' \g s_1, \pi^{* (n)} (s_2)) V(s')^{(n)}},
\end{align}
\begin{align}\label{v_s2_n+1}
V(s_2)^{(n + 1)} \nonumber&= \alpha s_2 + (1 - \alpha) P \mathbbm{1}\left(\pi^{* (n)} (s_2) \neq 0\right) \\&+ \sum\limits_{s' \in \mathcal{S}} {\P(s' \g s_2, \pi^{* (n)} (s_2)) V(s')^{(n)}},
\end{align} 
where step (a) follows since it is not optimal to take action $\pi^{*(n)}(s_2)$ in state $s_1$. From (\ref{trans_prob}), note that we have
\begin{align}\label{costtogo}
 \nonumber& \sum\limits_{s' \in \mathcal{S}} {\P(s' \g s_i, a) V(s')^{(n)}}  = \mathbbm{1}\left(a = 0\right) V(s_i^+)^{(n)} \\&+ \mathbbm{1}\left(a \neq 0\right)\left[\mu_a V(1)^{(n)} + \left(1 - \mu_a\right)V(s_i^+)^{(n)}\right].
\end{align} 

Since $s_1 \leq s_2$, we have $V(s_1^+) \leq V(s_2^+)$, and hence we observe from (\ref{costtogo}) that 
\begin{align*}
\sum\limits_{s' \in \mathcal{S}} {\left[\P(s' \g s_1, \pi^{* (n)} (s_2)) - \P(s' \g s_2, \pi^{* (n)} (s_2)) \right] V(s')^{(n)}} \leq 0. 
\end{align*}

Thus, the right hand side of (\ref{v_s1_n+1}) is less than or equal to $V(s_2)^{(n+1)}$, which leads to having $V(s_1)^{(n+1)} \leq V(s_2)^{(n+1)}$. This completes the proof.
\end{proof}

Let $k^*$ denote the index of the channel with the highest successful transmission probability, i.e., $\mu_{k^*} > \mu_k, \forall k \in \mathcal{A} \setminus \{k^*\}$. Based on Lemma \ref{Lemma:2}, the following Lemma characterizes the structure of the optimal policy $\pi^*$.

\begin{lemma}\label{Lemma:3}
The optimal policy $\pi^*$ has the following structural properties: \\(i) When $s_1 \geq s_2$, if $\pi^*(s_1) = 0$, then $\pi^*(s_2) = 0$. \\(ii) When $s_1 \leq s_2$, if $\pi^*(s_1) = k^*$, then $\pi^*(s_2) = k^*$.
\end{lemma}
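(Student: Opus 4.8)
The plan is to collapse the $(C+1)$-way minimization in (\ref{eq:optimal_policy}) into a single scalar test --- ``drop'' versus ``transmit over the best channel $k^*$'' --- and then read off both monotonicity claims from Lemma \ref{Lemma:2}. Using (\ref{Q_func}) together with the identity (\ref{costtogo}), the $Q$-factors take the compact form $Q(s,0) = \alpha s + V(s^{+})$ and, for $i>0$, $Q(s,i) = \alpha s + (1-\alpha)P + \mu_i V(1) + (1-\mu_i) V(s^{+})$, where $s^{+} = {\rm min}\{A_{\rm m}, s+1\}$. Subtracting two transmit actions gives $Q(s,i) - Q(s,j) = (\mu_j - \mu_i)\big(V(s^{+}) - V(1)\big)$, and since $s \ge 1$ forces $s^{+} \ge 1$, Lemma \ref{Lemma:2} yields $V(s^{+}) - V(1) \ge 0$. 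Hence the transmit action with the largest success probability, namely $k^*$, minimizes $Q(s,\cdot)$ over $\{1,\dots,C\}$ for every $s$, so the only comparison that matters is the sign of
\begin{align*}
\Delta(s) := Q(s,k^*) - Q(s,0) = (1-\alpha)P - \mu_{k^*}\big(V(s^{+}) - V(1)\big).
\end{align*}
In particular, $\pi^*(s) = 0$ implies $\Delta(s) \ge 0$, and $\pi^*(s) = k^*$ implies $\Delta(s) \le 0$.

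The second and last ingredient is that $\Delta$ is non-increasing in $s$: the map $s \mapsto s^{+}$ is non-decreasing, so by Lemma \ref{Lemma:2} the map $s \mapsto V(s^{+}) - V(1)$ is non-decreasing, hence $s \mapsto -\mu_{k^*}\big(V(s^{+}) - V(1)\big)$, and therefore $\Delta$, is non-increasing. Part (i) follows immediately: if $s_1 \ge s_2$ and $\pi^*(s_1) = 0$, then $\Delta(s_2) \ge \Delta(s_1) \ge 0$, so $Q(s_2,0) \le Q(s_2,k^*) \le Q(s_2,i)$ for all $i>0$, i.e.\ action $0$ attains the minimum in (\ref{eq:optimal_policy}) at $s_2$. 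Part (ii) is the mirror image: if $s_1 \le s_2$ and $\pi^*(s_1) = k^*$, then $\Delta(s_2) \le \Delta(s_1) \le 0$, and combined with $Q(s_2,k^*) \le Q(s_2,i)$ for all $i>0$ this makes $k^*$ attain the minimum at $s_2$.

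I do not expect a genuine obstacle; the argument is essentially a one-line consequence of Lemma \ref{Lemma:2} once the $Q$-factors are put in the form above. The two points that require a little care are (a) checking the direction of the inequalities after the substitution --- in particular that, among all transmit actions, the ``drop vs.\ $k^*$'' gap is the binding one --- and (b) the handling of ties in (\ref{eq:optimal_policy}): when $\Delta(s_2) = 0$ both $0$ and $k^*$ are minimizers, so parts (i) and (ii) should be understood as asserting that action $0$ (resp.\ $k^*$) is \emph{an} optimal action at $s_2$, i.e.\ as a property that a suitably chosen stationary deterministic optimal policy satisfies. It is also worth recording, since this is the form used in Section \ref{sec:learningopt}, that the same two inequalities give the equivalent threshold description: $\pi^*(s) = 0$ exactly when $\mu_{k^*}\big(V(s^{+}) - V(1)\big) \le (1-\alpha)P$, which by monotonicity of $V$ holds precisely for $s$ below some threshold.
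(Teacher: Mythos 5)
Your proof is correct and follows essentially the same route as the paper: first reduce the action set to $\{0,k^*\}$ via the $Q$-factor difference $(\mu_j-\mu_i)\bigl(V(s^{+})-V(1)\bigr)$ and Lemma \ref{Lemma:2}, then observe that the drop-vs-transmit gap $Q(s,0)-Q(s,k^*)$ is monotone in $s$ (the paper states this as the inequality (\ref{struc_prop}), which is exactly your statement that $\Delta$ is non-increasing). Your explicit remark about tie-breaking is a reasonable clarification the paper leaves implicit, but it does not change the argument.
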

\begin{proof}
 We start the proof by showing that $\pi^*(s) \in \{0,k^*\}, \forall s \in \mathcal{S}$. In particular, for $k > 0$, we have
 \begin{align}\label{Q_func_0}
Q(s,0) = \alpha s + V(s^+),
\end{align}
 \begin{align}\label{Q_func_a}
Q(s,k) = \alpha s + V(s^+) + (1 - \alpha) P - \mu_k \left[V(s^+) - V(1)\right].
\end{align}

From Lemma \ref{Lemma:2}, we have $V(s^+) - V(1) \geq 0$, and hence we observe from (\ref{Q_func_a}) that $k^* = {\rm arg} \underset{k \in \mathcal{A}\setminus \{0\}}{\rm min}\; Q(s,k), \forall s \in \mathcal{S}$. Hence, $\pi^*(s) \in \{0,k^*\}, \forall s \in \mathcal{S}$. Now, note that proving that $\pi^*(s_1) = a$ leads to $\pi^*(s_2) = a'$ is equivalent to showing that
\begin{align}\label{struc_prop}
Q(s_2,a) - Q(s_2,a') \leq Q(s_1,a) - Q(s_1,a'), \forall a' \neq a,
\end{align}
where this holds since if $a$ is optimal in state $s_1$, then we have $Q(s_1,a) - Q(s_1,a') \leq 0, \forall a'$, which leads to $Q(s_2,a) \leq Q(s_2,a'),\forall a'$, i.e., taking action $a$ is optimal in state $s_2$. Hence, (i) is proven ((ii) is proven) if (\ref{struc_prop}) holds when $a = 0$ and $a' = k^*$ ($a = k^*$ and $a' = 0$). Therefore, in the remaining, we focus on the proof of (i) while (ii) can be proven similarly. Since $s_1\geq s_2$ and based on Lemma \ref{Lemma:2}, we have $V(s_1^+) \geq V(s_2^+)$. Hence (\ref{struc_prop}) holds for $a = 0$ and $a' = k^*$, which completes the proof of (i).
\end{proof}
\begin{remark}\label{rem:1}
According to Lemma \ref{Lemma:3}, the optimal policy $\pi^*$ has a threshold-based structure, where it is optimal to transmit a status update only when the AoI/state is above some threshold value $A_{\rm th}$ (i.e., the updates are dropped when the AoI is less than or equal to $A_{\rm th}$). In addition, the scheduler uses the channel with the highest successful transmission probability (i.e., $C_{k^*}$) for each transmission attempt. Further, from (\ref{Q_func_0}) and (\ref{Q_func_a}), if $P < P_{\rm min} = \frac{\mu_{k^*}}{1 - \alpha}\left(V(2) - V(1)\right)$, then $\pi^*(s) = k^*, \forall s$, whereas if $P > P_{\rm max} = \frac{\mu_{k^*}}{1 - \alpha}\left(V(A_{\rm m}) - V(1)\right)$, then $\pi^*(s) = 0, \forall s$.
\end{remark}

Based on Lemma \ref{Lemma:3} and Remark \ref{rem:1}, $1 \leq A_{\rm th} < A_{\rm m}$ when $P \in [P_{\rm min},P_{\rm max}]$. In that case, the optimal value of the long-term average cost $\rho^*$ is obtained in closed-form in the following Lemma.
\begin{figure}[t!]
\centering
\includegraphics[width= 0.6\columnwidth]{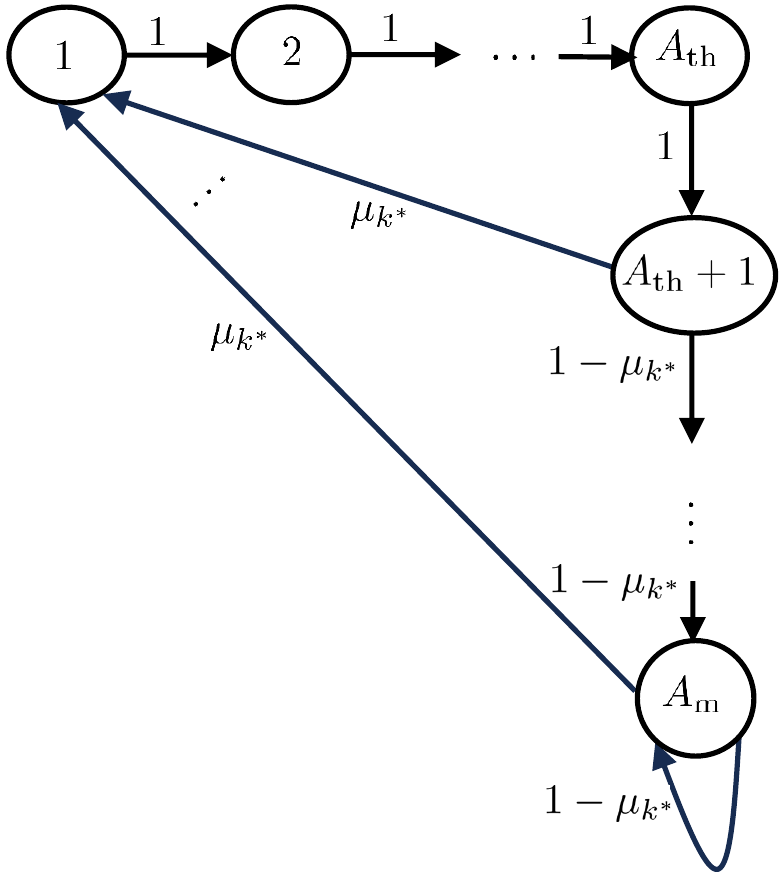}
\caption{The discrete time Markov chain induced by the optimal policy $\pi^*$.}
\label{f:MC}
\end{figure}
\begin{lemma}\label{Lemma:4}
 The optimal value of the long-term average cost associated with the optimal policy $\pi^*$ (with $1 \leq A_{\rm th} < A_{\rm m}$) is given by
 \begin{align}\label{rho_star}
\rho^* = \frac{1}{A_{\rm th} + \mu^{-1}_{k^*}} \sum_{i=1}^{3}{\beta_i},
 \end{align}
 where 
 \begin{align}
  &\beta_1 = \frac{\alpha A_{\rm th}\left(A_{\rm th} + 1\right)}{2},\\
  &\beta_2 = \frac{\alpha \left(A_{\rm th} + 1\right) + \left(1 - \alpha\right) P + \alpha \beta \left(1 - \mu_{k^*}\right)^{\beta}}{\mu_{k^*}},\\
  & \beta_3 = \frac{\alpha\left(1 - \mu_{k^*}\right)}{\mu_{k^*}^2} \left[\frac{\left(1 - \mu_{k^*}\right)\left(1 - \beta\right) - \beta}{\left(1 - \mu_{k^*}\right)^{\beta - 1} } + 1\right],\\
  & \beta = A_{\rm m} - \left(A_{\rm th} + 1\right).
 \end{align}
\end{lemma}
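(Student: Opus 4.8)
The plan is to leverage the threshold structure established in Lemma~\ref{Lemma:3} and Remark~\ref{rem:1}: for $P\in[P_{\rm min},P_{\rm max}]$ the optimal policy $\pi^*$ drops every generated update while $A(t)\le A_{\rm th}$ and transmits over the best channel $C_{k^*}$ whenever $A(t)>A_{\rm th}$, with $1\le A_{\rm th}<A_{\rm m}$. The Markov chain this policy induces (Fig.~\ref{f:MC}) is irreducible and positive recurrent on $\{1,\dots,A_{\rm m}\}$ and regenerates each time it visits state $1$, so I would compute $\rho^*$ from the renewal--reward theorem as the ratio of the expected cost incurred during one regeneration cycle to the expected length of that cycle. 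Matching this ratio to (\ref{rho_star}) then amounts to identifying the numerator with $\beta_1+\beta_2+\beta_3$ and the denominator with $A_{\rm th}+\mu_{k^*}^{-1}$.

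First I would handle the cycle length. Starting from state $1$, the chain climbs deterministically $1\to2\to\cdots\to A_{\rm th}\to A_{\rm th}+1$, which takes $A_{\rm th}$ slots, and from state $A_{\rm th}+1$ onward the source transmits on $C_{k^*}$ every slot, so the number $N$ of attempts until the first success (which resets the AoI to $1$ and closes the cycle) is geometric with parameter $\mu_{k^*}$. Hence $\E[\text{cycle length}]=A_{\rm th}+\E[N]=A_{\rm th}+\mu_{k^*}^{-1}$, the desired denominator.

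Next I would compute the expected cycle cost, splitting it into three parts according to the per-slot cost $\alpha s+(1-\alpha)P\,\mathbbm 1(a\neq0)$ of (\ref{cost_reexpressed}). (i) During the drop phase the chain visits states $1,\dots,A_{\rm th}$ exactly once each with no power cost, contributing $\alpha\sum_{s=1}^{A_{\rm th}}s=\beta_1$. (ii) Each of the $N$ transmission attempts costs $(1-\alpha)P$, contributing $(1-\alpha)P\,\E[N]=(1-\alpha)P\,\mu_{k^*}^{-1}$ in expectation. (iii) The age cost of the transmission phase: the $j$-th attempt, which occurs exactly on the event $\{N\ge j\}$ of probability $(1-\mu_{k^*})^{j-1}$, is made in state $\min\{A_{\rm m},A_{\rm th}+j\}$; splitting this minimum at $j=\beta$ (with $\beta=A_{\rm m}-(A_{\rm th}+1)$, so that every attempt beyond the $\beta$-th is made in state $A_{\rm m}$) gives an expected age cost of $\alpha\big[\sum_{j=1}^{\beta}(A_{\rm th}+j)(1-\mu_{k^*})^{j-1}+A_{\rm m}(1-\mu_{k^*})^{\beta}\mu_{k^*}^{-1}\big]$, where the last term captures that the chain self-loops at $A_{\rm m}$ for a geometric number of slots.

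The remaining work is routine: evaluate the finite geometric sums $\sum_{j=1}^{\beta}(1-\mu_{k^*})^{j-1}$ and $\sum_{j=1}^{\beta}j(1-\mu_{k^*})^{j-1}$ in closed form, substitute $A_{\rm m}=A_{\rm th}+1+\beta$, add the three parts, and regroup so that the power cost together with the part of the transmission-age cost not weighted by powers of $(1-\mu_{k^*})$ collects into $\beta_2$, while the genuinely geometrically weighted remainder collects into $\beta_3$. Dividing $\beta_1+\beta_2+\beta_3$ by $A_{\rm th}+\mu_{k^*}^{-1}$ then yields (\ref{rho_star}). I expect the only delicate step to be the AoI truncation at $A_{\rm m}$ in part (iii) — one must not treat the transmission phase as an unbounded climb, since after $\beta$ consecutive failures the AoI stays pinned at $A_{\rm m}$ — together with the subsequent algebraic bookkeeping that packages the geometric sums into the precise $\beta_2,\beta_3$; everything else follows directly from the renewal--reward identity.
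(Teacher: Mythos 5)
Your proposal is correct, and it reaches the result by a genuinely different (though closely related) route from the paper. The paper works with the stationary distribution of the Markov chain induced by $\pi^*$: it writes down the balance equations, solves for $\{\gamma_s\}$ in terms of $\gamma_{A_{\rm th}+1}$, normalizes to get $\gamma_{A_{\rm th}+1}=1/(A_{\rm th}+\mu_{k^*}^{-1})$, and then computes $\rho^*=\sum_s C^*_s\gamma_s$. You instead treat visits to state $1$ as regeneration epochs and invoke the renewal--reward theorem, so the prefactor $1/(A_{\rm th}+\mu_{k^*}^{-1})$ appears directly as the reciprocal of the expected cycle length $A_{\rm th}+\E[N]$ rather than as a normalization constant, and the numerator is assembled from the expected per-cycle cost (deterministic climb, geometric number of paid transmission attempts, and the geometric sojourn at the absorbing-age state $A_{\rm m}$). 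The two computations are equivalent --- for this chain, $\gamma_s$ is exactly the expected number of visits to $s$ per cycle divided by the expected cycle length --- but your decomposition is arguably more transparent about where each $\beta_i$ comes from and, as you note, makes the truncation at $A_{\rm m}$ (the self-loop there) an explicit and correctly handled term, which is the one place a careless argument would go wrong. Both proofs defer the same final algebraic repackaging of the geometric sums into $\beta_2,\beta_3$, so on that point you are no less complete than the paper.
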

\begin{proof}
According to Lemma \ref{Lemma:3} and Remark \ref{rem:1}, the discrete time Markov chain (representing the system state) induced by the optimal policy $\pi^*$ is depicted in Fig. \ref{f:MC}. Thus, $\rho^*$ can be expressed as 
\begin{align}\label{rho_star_inter}
\rho^* = \sum\limits_{s \in \mathcal{S}}{C^*_s \gamma_s},
\end{align}
where $C^*_s$ is the cost of being in state $s$ under the optimal policy $\pi^*$, and $\{\gamma_s\}_{s \in \mathcal{S}}$ is the stationary distribution of the discrete time Markov chain in Fig. \ref{f:MC} (induced by $\pi^*$). Note that $C^*_s$ can be expressed as
\begin{align}\label{cost_star}
   C^*_s = \begin{cases}
   \alpha s,\;\; &\; s \leq A_{\rm th},\\
   \alpha s + \left(1 - \alpha\right) P,\;\; &\; {\rm otherwise}.
   \end{cases}
\end{align}

Thus, what remains is to obtain the probabilities $\{\gamma_s\}$. The balance equations associated with the Markov chain in Fig. \ref{f:MC} are given by
\begin{align}\label{Beqs}
    \nonumber&\gamma_s = \gamma_{s - 1},&& 2 \leq s \leq A_{\rm th} + 1,\\
    \nonumber&\gamma_s = \left(1 - \mu_{k^*}\right)\gamma_{s - 1},&& A_{\rm th} + 2 \leq k < A_{\rm m},\\
    & \mu_{k^*} \gamma_{A_{\rm m}} = \left(1 - \mu_{k^*}\right) \gamma_{A_{\rm m} - 1}.
\end{align}

From (\ref{Beqs}), we get
\begin{align}\label{Beqs_mod}
\nonumber&\gamma_i = \gamma_{A_{\rm th} + 1},\;\; 1 \leq i \leq A_{\rm th},\\
\nonumber&\gamma_{A_{\rm th} + 1 + i} = \left(1 - \mu_{k^*}\right)^i \gamma_{A_{\rm th} + 1},\;\; 0 \leq i \leq A_{\rm m} - A_{\rm th} - 2,\\
& \gamma_{A_{\rm m}} = \mu^{-1}_{k^*} \left(1 - \mu_{k^*}\right)^{A_{\rm m} - \left(A_{\rm th} + 1\right)} \gamma_{A_{\rm th} + 1}.
\end{align}

By applying $\sum\limits_{s\in\mathcal{S}}{\gamma_s} = 1$ to the set of equations in (\ref{Beqs_mod}), we obtain
\begin{align}\label{prob_th}
 \gamma_{A_{\rm th} + 1} = \frac{1}{A_{\rm th} + \mu^{-1}_{k^*}}.   
\end{align}

The final expression of $\rho^*$ in (\ref{rho_star}) is derived from substituting $\{C^*_s\}$ and $\{\gamma_s\}$ from (\ref{cost_star})-(\ref{prob_th}) into (\ref{rho_star_inter}), followed by some algebraic simplifications. This completes the proof.
\end{proof}

\section{Order-Optimal Learning Algorithms}\label{sec:learningopt}
 In this section, we use the insights obtained about the structure of the optimal policy in the previous section to develop order-optimal learning algorithms with provable finite-time guarantees for the case when the channel statistics are unknown to the scheduler. For ease of presentation of our proposed algorithms, we will use an equivalent normalized reward function $r(s(t),a(t)) \in [0,1]$ to the cost function $C(s(t),a(t))$ defined in (\ref{cost_reexpressed}). In particular, we have
 %$R(s(t),a(t))$ is given by
 \begin{align}\label{reward_norm}
     r(s(t),a(t))=\frac{\alpha A_{\rm m} + (1-\alpha) P - C(s(t),a(t))}{\alpha (A_{\rm m}-1) + (1-\alpha) P}.
 \end{align}

According to (\ref{reward_norm}), the minimum value of $C(s(t),a(t))$, i.e., $C(1,0)$, is mapped to $r(1,0) = 1$, whereas the maximum value of $C(s(t),a(t))$, i.e., $C(A_{\rm m}, a(t) \neq 0)$, is mapped to $r(A_{\rm m}, a(t) \neq 0) = 0$. We also consider an episodic finite horizon MDP setting, where the finite time horizon $T$ is divided into $K$ episodes with equal length $H$ (i.e., each episode has $H$ time steps/slots and $T = K H$). Let $s^{\pi}_{k,h}$ and $s^{*}_{k,h}$ ($a^{\pi}_{k,h}$ and $a^{*}_{k,h}$) denote the state of the system (the action) at the $h$-th time step in episode $k$ under the learning algorithm $\pi$ and the optimal policy $\pi^*$, respectively. Note that since the channel statistics are assumed to be unknown, the transition probability matrix of the MDP under consideration is unknown to the scheduler (as can also be observed from (\ref{trans_prob})). In particular, as the scheduler interacts with the MDP over time, it observes the states, actions
and rewards generated by the unknown system dynamics (or transition probability matrix). This leads to the fundamental exploration-exploitation tradeoff where the scheduler needs to balance between exploring poorly-understood state-action pairs (to gain information and improve future performance) and exploiting its current knowledge about the system dynamics (to optimize short-run rewards). 

Before going into more details about our proposed AoI-aware order-optimal learning algorithms, it is worth noting that an AoI-agnostic learning algorithm that can handle the setting of episodic finite horizon MDPs with unknown system dynamics is the upper confidence bound value iteration (UCBVI) algorithm \cite{azar2017minimax}. The key idea of the UCBVI algorithm is to directly add an exploration bonus term (to strike a balance between exploration and exploitation) to the Q-values, rather than building confidence sets for the transition probabilities and rewards (as in UCRL2 \cite{jaksch2010near}). This leads to an improvement in the achievable regret bound by the UCBVI algorithm (compared to UCRL2). 
%Indeed, the UCBVI algorithm achieves minimax regret bounds (that matches the established lowerbound up to a logarithmic factor) for this MDP problem. 
By directly applying the analysis of the UCBVI algorithm in \cite{azar2017minimax} to our problem, with probability $1-\delta$ ($0<\delta<1$), the regret can be upper bounded as follows
\begin{align}\label{UCBVI_regret}
R(T) \leq O\left( [ \alpha (A_{\rm m}-1) + (1-\alpha)P ]\sqrt{H A_{\rm m} C T} \right).
\end{align}

The regret bound of the UCBVI algorithm in (\ref{UCBVI_regret}) is near-optimal since it matches the established regret lower bound of \cite{jaksch2010near} for this MDP problem up to logarithmic factors: 
\begin{align}\label{regret_lowerbound}
R(T) \geq \Omega\left( [ \alpha (A_{\rm m}-1) + (1-\alpha)P ]\sqrt{H A_{\rm m} C T} \right).
\end{align}

In the sequel, we significantly improve the dependency of the regret on $T$ by developing novel AoI-aware order-optimal learning algorithms that utilize the structure of the optimal policy. In particular, our proposed learning algorithms achieve provably $O(1)$ regret bounds (i.e., the regret is bounded with respect to the increase in $T$).

\subsection{An Order-Optimal Learning Algorithm with Exploration Bonus}
As a consequence of the insights obtained in Section \ref{sec:known} for the infinite time average-cost problem, it is possible that the optimal policy (that knows the channel statistics beforehand) for the finite horizon model drops the generated status updates in certain time slots (e.g., when the AoI value is relatively small). This is in contrast to the optimal policy for the settings studied in \cite{fatale2021regret,juneja2021correlated,prasad2021decentralized,atay2021aging} (in which the transmission costs of sending status updates are ignored), where it was optimal to send a status update over the channel with the highest reliability every time slot. This key insight is utilized to develop our proposed order-optimal learning algorithms. In particular, when the action is to drop the generated status update in a certain time slot (i.e., the channels are idle), it would be useful to utilize that slot for exploring the status of one of the channels at a negligible power cost (by sending a pilot signal). 

\begin{algorithm}[t!]
	\caption{Order-optimal algorithm with exploration bonus.}
	\begin{algorithmic}
	%	[1]\footnotesize 
% $Q_{h}(s,a)=R(s,a)+E[\sum_{h'=h+1}^{H}R(s_{h'},a_{h'})|s_h=s,a_h=a]$
         \State \textbf{for} $k = 1, \cdots, K$ \textbf{do}
         \State \quad Compute, for all $(s,a,s') \in \mathcal{S} \times \mathcal{A} \times \mathcal{S}$,
\State \quad $N_k(s,a,s')=\sum_{\tau=1}^{k-1}\sum\limits_{s_{\tau},a_{\tau},s_{\tau}'} \mathbbm{1}(s_{\tau}=s,a_{\tau}=a,s_{\tau}'=s')$
\State \quad $N_k(s,a)=\sum_{s'\in\mathcal{S}} N_k(s,a,s')$
\State \quad $\hat{T}_k(a)=\frac{\sum_{s,s'=s^{+}}N_k(s,a,s')}{\sum_{s}N_k(s,a)}$, $\forall a \in \mathcal{A}\setminus \{0\}$
\State \quad \textbf{for} $s \in \mathcal{S}$
\State \quad \quad \textbf{if} $a=0$ \textbf{then}
   \State \quad \quad \quad $\hat{\P}_k^{'}(s^+|s,a)=1$
   \State \quad \quad \textbf{else} 
  \State \quad \quad \quad $\hat{\P}_k^{'}(s^+|s,a)=\hat{T}_k(a)$ and $\hat{\P}_k^{'}(1|s,a)=1-\hat{T}_k(a)$
  \State \quad \quad \textbf{end if}
  \State \quad \textbf{end for}
\State \quad Initialize $V_{k,H+1}(s) = 0$ for all $s \in \mathcal{S}$
  \State \quad \textbf{for} $h = H, \cdots, 1$ \textbf{do}
  \State \quad \quad \textbf{for} $(s,a) \in \mathcal{S} \times \mathcal{A}$ 
  \State \quad \quad \quad $Q_{k,h}(s,a) = \min\{Q_{k-1,h}(s,a),H,r(s,a)$ 
  \State \quad \quad \quad  \quad $+\mathbb{E}_{\hat{\P}_k^{'}}[V_{k,h+1}(s')|s,a]$
  \State \quad \quad \quad \quad $+7H\ln(5SAT/\delta)\sqrt{\frac{1}{N_k(s,a)}}\mathbbm{1}(\theta_0-k\geq 0)\},$
\State \quad \quad \quad $V_{k,h}(s)=\max_{a\in\mathcal{A}} Q_{k,h}(s,a)$
  \State \quad \quad \textbf{end for}
  \State \quad \textbf{end for}
  \State \quad \textbf{for} $h = 1, \cdots, H$ \textbf{do}
  \State \quad \quad Set $s^{\pi}_{k,1} \in \mathcal{S}$ arbitrarily
  \State \quad \quad Take action $a^{\pi}_{k,h} = {\rm arg} \underset{a \in \mathcal{A}}{\rm max}\; Q_{k,h}(s_{k,h}^{\pi},a)$ and observe 
  \State \quad \quad $s^{\pi}_{k,h+1}$
  \State \quad \quad \textbf{if} $a^{\pi}_{k,h}=0$ \textbf{then}
  \State \quad \quad Send a pilot signal over a uniformly randomly-chosen 
  \State \quad \quad channel $c$ and record $\left(s_{k,h}^{\pi},0,s_{k,h}^{\pi,+}\right)$ or $\left(s_{k,h}^{\pi},0,1\right)$ 
  \State \quad \quad based on the outcome of the transmission
  \State \quad \quad \textbf{end if} 
  \State \quad \textbf{end for}
  \State \textbf{end for} 
	\end{algorithmic}
        \label{Alg:withbonus}
\end{algorithm}
Our first order-optimal learning algorithm  is described in Algorithm \ref{Alg:withbonus}. Specifically, prior to each episode $k$, we obtain some estimates for the system dynamics (or the state transition probabilities denoted by $\{\hat{\P}'_k(s'|s,a)\}_{s,a,s'}$ based on the counts of state-action pairs experienced prior to episode $k$. These estimated transition probabilities are then fed into a backward induction algorithm with an exploration bonus term (directly added to the Q-values, similar to the UCBVI algorithm) to evaluate the policy to be executed within episode $k$. Here, the exploration bonus term is given by $7H\ln(5SAT/\delta)\sqrt{\frac{1}{N_k(s,a)}}$, which was obtained in \cite{azar2017minimax} based on the Chernoff-Hoeffding's concentration inequality. Finally, the evaluated policy is executed within episode $k$ while utilizing that whenever the action is to drop the generated status update, an opportunity for exploration is created by sending a pilot signal over a uniformly randomly-chosen channel. The achievable regret by Algorithm \ref{Alg:withbonus} is stated in the following Theorem.
\begin{theorem}\label{theorem1}
With probability $1-\delta$, the regret of Algorithm \ref{Alg:withbonus} is upper bounded as follows:
\begin{align}\label{regret_algowith}
R(T) \leq O\left( [\alpha (A_m-1) + (1-\alpha) P] H \sqrt{\theta_0} \right),
\end{align}
where $\theta_0 = \Theta(C^2 \ln \frac{2C}{\delta})$.
\end{theorem}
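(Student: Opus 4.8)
The plan is to split the $K$ episodes at index $\theta_0$, show that essentially all regret is incurred in the first $\theta_0$ episodes (where the exploration bonus is active), and that the remaining episodes contribute nothing. I would first fix a single ``clean event'' $\mathcal{G}$ on which every empirical channel estimate $\hat T_k(a)$ obeys its Hoeffding confidence bound. Since by (\ref{trans_prob}) the unknown kernel is described by the $C$ scalars $\{\mu_i\}$ alone, and $\hat T_k(a)$ pools \emph{all} visits to action $a$ over states (its denominator is $n_k(a):=\sum_{s'}N_k(s',a)$), a union bound over the $C$ channels and the at most $\Theta(\theta_0 H)$ possible sample sizes --- absorbed by the $\ln(5SAT/\delta)$ inside the bonus and the $\ln(2C/\delta)$ in $\theta_0$ --- gives $\P(\mathcal{G})\ge 1-\delta$. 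Two structural facts then drive everything: (i) the one-step backup error $\big|\E_{\hat\P'_k}[V_{k,h+1}\mid s,a]-\E_{\P}[V_{k,h+1}\mid s,a]\big| = |\hat T_k(a)-(1-\mu_a)|\,\big(V_{k,h+1}(s^+)-V_{k,h+1}(1)\big)$ is at most $H$ times the width of the channel confidence interval, hence --- since $N_k(s,a)\le n_k(a)$ --- at most the bonus; and (ii) the pilot-signal step records a fresh sample of a uniformly chosen channel in every slot in which the executed action is $0$, so channel counts keep growing even when the algorithm already behaves like $\pi^*$.

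\emph{Episodes $k\le\theta_0$.} By fact (i) the bonus dominates the model error, so on $\mathcal{G}$ the usual optimism holds ($Q_{k,h}$ upper-bounds the optimal $(H-h+1)$-step $Q$-function) and the running $\min$ with $Q_{k-1,h}$ supplies the monotone-value refinement of \cite{azar2017minimax}. I would then reuse the UCBVI regret decomposition: the per-step regret at $(k,h)$ is bounded by the bonus plus a bounded martingale increment plus second-order transition-error terms, and summing over $k\le\theta_0$ leaves the leading term $\sum_{k\le\theta_0}\sum_{h=1}^{H}7H\ln(5SAT/\delta)/\sqrt{N_k(\cdot)}$. The structure is used twice more here: the binary branching of (\ref{trans_prob}) together with the pooled estimator collapses the combinatorial factor that usually equals $SA=A_m(C+1)$ down to one involving only the $C$ channel parameters, while fact (ii) makes the relevant counts grow linearly in $k$, so $\sum_{k\le\theta_0}1/\sqrt{N_k}=O(\sqrt{\theta_0})$. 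Rescaling the normalized rewards of (\ref{reward_norm}) back to costs by $\alpha(A_m-1)+(1-\alpha)P$ gives $R_{\le\theta_0}(T)\le O\big([\alpha(A_m-1)+(1-\alpha)P]\,H\sqrt{\theta_0}\big)$.

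\emph{Episodes $k>\theta_0$.} I would show $\theta_0=\Theta\big(C^2\ln\frac{2C}{\delta}\big)$ forces, on $\mathcal{G}$, every channel to have accumulated enough samples that $|\hat T_k(a)-(1-\mu_a)|$ falls below the smallest action-gap of the MDP --- a Chernoff bound guaranteeing each of the $C$ channels its roughly $1/C$ fraction of the $\Theta\big(C\ln\frac{2C}{\delta}\big)$ draws it needs, which is where the product $C^2$ comes from (any residual gap-dependence being absorbed into $\Theta(\cdot)$). Since the bonus-free backward induction (\ref{value_function_itern}) is Lipschitz in the kernel with a constant controlled by $H$, a perturbation that small cannot flip any $\argmax$: Lemma~\ref{Lemma:2}, Lemma~\ref{Lemma:3} and Remark~\ref{rem:1} guarantee that $\pi^*$ is the threshold policy on channel $k^*$, and the comparison of $Q(s,0)$ and $Q(s,k^*)$ in (\ref{Q_func_0})--(\ref{Q_func_a}) is stable under such perturbations (including the inherited $\min$ with $Q_{k-1,h}$, which a short downward induction on $h$ keeps within the gap of the optimal value). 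Hence the executed policy equals $\pi^*$ for every $k>\theta_0$, so $R_{>\theta_0}(T)=0$ on $\mathcal{G}$, and adding the two phases proves the bound with probability $1-\delta$.

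The hard part will be the bonus-active phase: executing the UCBVI analysis so that $S=A_m$ and $A=C+1$ survive only in the scalar prefactor $\alpha(A_m-1)+(1-\alpha)P$ and not under $\sqrt{\theta_0}$. This rests on the pooled-estimator observation (replacing per-$(s,a)$ counting by per-channel counting), the pilot-signal observation (making those counts grow at a linear rate), and on checking that the policies executed while exploring stay ``threshold-like'' enough that the state--action pairs entering the backup are revisited $\Theta(k)$ times by episode $k$. A secondary but necessary step is the stability claim above: quantifying the smallest action-gap in terms of $\alpha$, $P$, $A_m$ and $\{\mu_i\}$, and verifying that neither the bonus-free Bellman update nor the inherited $\min$ with $Q_{k-1,h}$ can flip an $\argmax$ once the estimates lie inside that gap.
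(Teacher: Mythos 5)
Your proposal is correct and follows essentially the same route as the paper's proof: a split at episode $\lceil\theta_0\rceil$, a UCBVI-style optimism argument plus pigeonhole over the bonus terms for the early phase yielding the $O\bigl([\alpha(A_m-1)+(1-\alpha)P]H\sqrt{\theta_0}\bigr)$ term, and a Hoeffding-based argument (driven by the pilot-signal samples accumulated during drop slots) showing the post-$\theta_0$ phase contributes only a negligible amount. The only cosmetic difference is that you assert the executed policy coincides with $\pi^*$ exactly on the clean event after $\theta_0$, whereas the paper bounds the per-episode probability of a send/drop or channel mismatch by exponentially decaying terms and sums them; these are equivalent packagings of the same concentration argument.
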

\begin{remark}\label{rem:2}Theorem \ref{theorem1} shows that with high probability $1-\delta$, the regret of Algorithm \ref{Alg:withbonus} with bonus terms is $O(1)$, especially that it does not increase with the total time horizon $T$ (or equivalently, the number of episodes $K$). Particularly, compared with existing MDP solutions (such as the UCBVI algorithm), we reduce the regret from $O(\sqrt{T})$ to $O(1)$, and the intuition behind that is as follows. During the early phase ($k \leq \lceil\theta_0\rceil$) where we need to carefully handle the exploration-exploitation tradeoff, the bonus term is added to encourage learning. However, thanks to the idea of sending a pilot signal when the action is to drop the generated status update, the success probabilities over different channels $\{\mu_i\}$ can be estimated much faster than the case where we do not use pilot signals (e.g., refer to (\ref{eqhoeffding}) where the use of pilot signals help us choose optimal actions with high probability in a faster way). Because of that, in the later phase ($k > \lceil\theta_0\rceil$), we can choose the optimal action greedily and with high probability. 
\end{remark}
Due to space limitations,  we will provide next a proof sketch of Theorem \ref{theorem1}.

\textit{Proof Sketch of Theorem \ref{theorem1}:} We analyze the phases before and after episode $\lceil \theta_0 \rceil$ separately. First, before episode $\lceil \theta_0 \rceil$, Algorithm \ref{Alg:withbonus} is similar to the UCBVI algorithm \cite{azar2017minimax}, but with the additional possibility of sending a pilot signal in each time step where the action is to drop the generated status update. Thus, we can still apply mathematical induction to the $V$-value function, (similar to the proof of \cite[Lemma 18]{azar2017minimax}). We first define the event 
\begin{align}
\Omega = \{ V_{k,h} \geq V_{h}^*, \forall k,h\}.
\end{align}

Under $\Omega$, all computed $V_{k,h}$ values in Algorithm \ref{Alg:withbonus} are upper bounds on the optimal
value function $V_h^* = \max_{\pi} \E[\sum_{j=h}^{H} r(s_{k,j}^{\pi},a_{k,j}^{\pi})]$. Using backward induction on steps $h$ and concentration inequalities, we can prove that $\Omega$ holds with high probability. Thus, the regret in the early phase before episode $\lceil \theta_0 \rceil$ is $\sum_{k=1}^{\lceil \theta_0 \rceil} \left[ V_{1}^*(s_{k,1}) - V'_{k,1}(s_{k,1}) \right]$, where $V'_{k,1}(s_{k,1}) = \E[\sum_{h=1}^{H} r(s_{k,h}^{\pi},a_{k,h}^{\pi})]$ represents the expected cumulative reward of Algorithm \ref{Alg:withbonus} in episode $k$. Under the event $\Omega$, we have
\begin{align}
& \sum_{k=1}^{\lceil \theta_0 \rceil} \left[ V_{1}^*(s_{k,1}) - V'_{k,1}(s_{k,1}) \right], \nonumber \\
& \leq \sum_{k=1}^{\lceil \theta_0 \rceil} \left[ V_{k,1}(s_{k,1}) - V'_{k,1}(s_{k,1}) \right], \nonumber \\
& \leq O\left( \bar{R} \sum_{k=1}^{\lceil \theta_0 \rceil} \sum_{(s,a)} H\ln(SAT/\delta)\sqrt{\frac{1}{N_k(s,a)}} \right), \nonumber \\
& \leq O\left( [\alpha (A_m-1) + (1-\alpha) P] H \sqrt{\theta_0} \right),
\end{align}
where $\bar{R} = [\alpha (A_m-1) + (1-\alpha) P]$, the second inequality is because the difference between $V_{k,1}(s_{k,1})$ and $V'_{k,1}(s_{k,1})$ is upper bounded by the bonus term $7H\ln(5SAT/\delta)\sqrt{\frac{1}{N_k(s,a)}}$ used by Algorithm \ref{Alg:withbonus}, and the last inequality is because of the pigeon hole principle. Second, after episode $\lceil \theta_0 \rceil$, we can show that with high probability, for each episode $k$, when $a_{k,h}^*=0$, $a_{k,h}^{\pi}=0$, or when $a_{k,h}^* > 0$, $a_{k,h}^{\pi} > 0$. In other words, Algorithm \ref{Alg:withbonus} will not mistakenly send or drop a status update. Specifically, according to Hoeffding's inequality, we have that after episode $\lceil \theta_0 \rceil$,
\begin{align}\label{eqhoeffding}
& \sum_{k>\lceil \theta_0 \rceil,h}\P\left\{ P(a_{k,h}^{\pi}) \neq P(a_{k,h}^{*}) \right\}, \nonumber \\
& \leq O\left( \exp{\left(-\frac{1}{2C^2}\lceil \theta_0 \rceil\right)} \right) \leq O(\delta),
\end{align}
where the last inequality is true because $\theta_0 = \Theta(C^2 \ln \frac{2C}{\delta})$. Also, according to Hoeffding's inequality and \cite[Lemma 12]{atay2021aging}, the regret due to choosing suboptimal channels after episode $\lceil \theta_0 \rceil$ can be upper bounded by $2C \cdot \left[ \exp{\left(-\frac{1}{2C^2}k\right)} + \exp{\left( - \frac{\Delta_{min}^2}{4C}k \right)} \right]$. For completeness, \cite[Lemma 12]{atay2021aging} states that for every suboptimal channel index $i\neq i^*$, the probability of choosing this suboptimal channel is bounded by
\begin{align}\label{lemma 12,13}
\P\{a_{k,h}^{\pi} = i\} \leq 2C \cdot \left[ \exp{\left(-\frac{1}{2C^2}k\right)} + \exp{\left( - \frac{\Delta_{min}^2}{4C}k \right)} \right].
\end{align}

According to (\ref{lemma 12,13}), the probability that our algorithm does not choose the optimal channel decreases in $k$ exponentially. This benefits from the fact that each episode has at least one pilot signal transmission (which occurs at the last time step of the episode where the action is to drop the generated status update, based on the implementation of the backward induction algorithm). The final regret in 
(\ref{regret_algowith}) is obtained by taking the sum over the episode index $k$.
\hfill 
\qed

\begin{figure*}[t!]
\centerline{
\subfloat[]{\includegraphics[width=0.33 \textwidth]{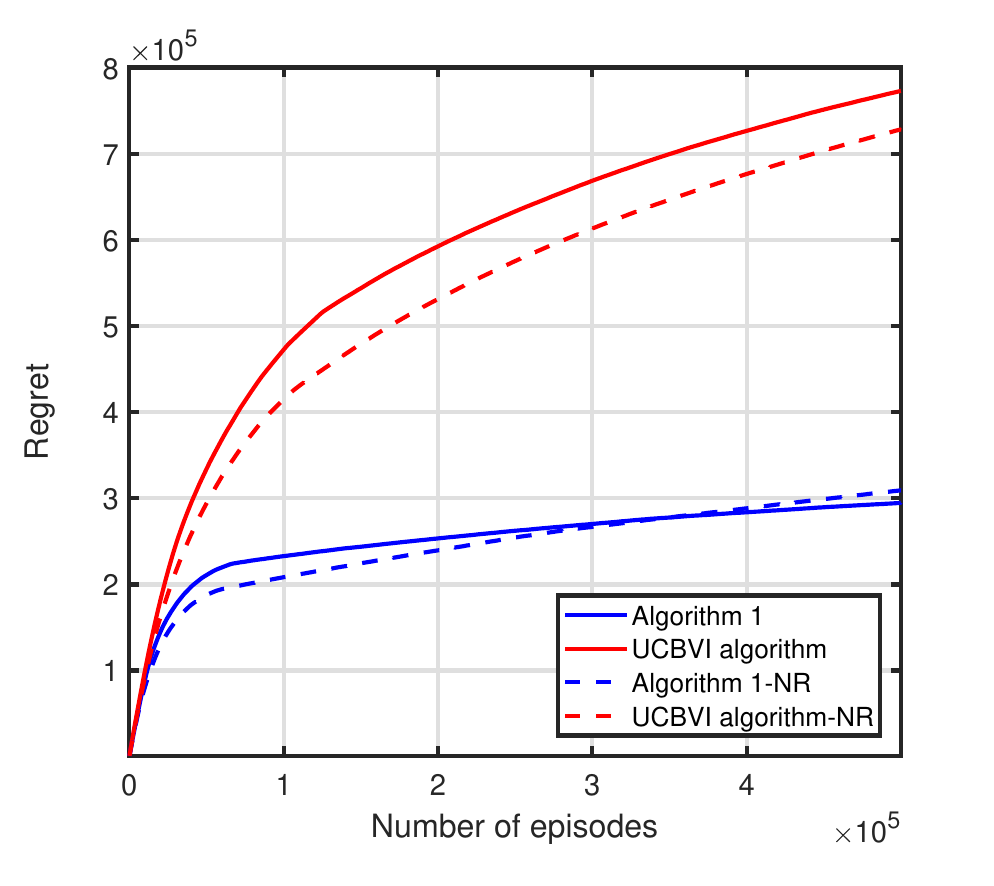}%
\label{f:h30}} \hfil
\subfloat[]{\includegraphics[width=0.33\textwidth]{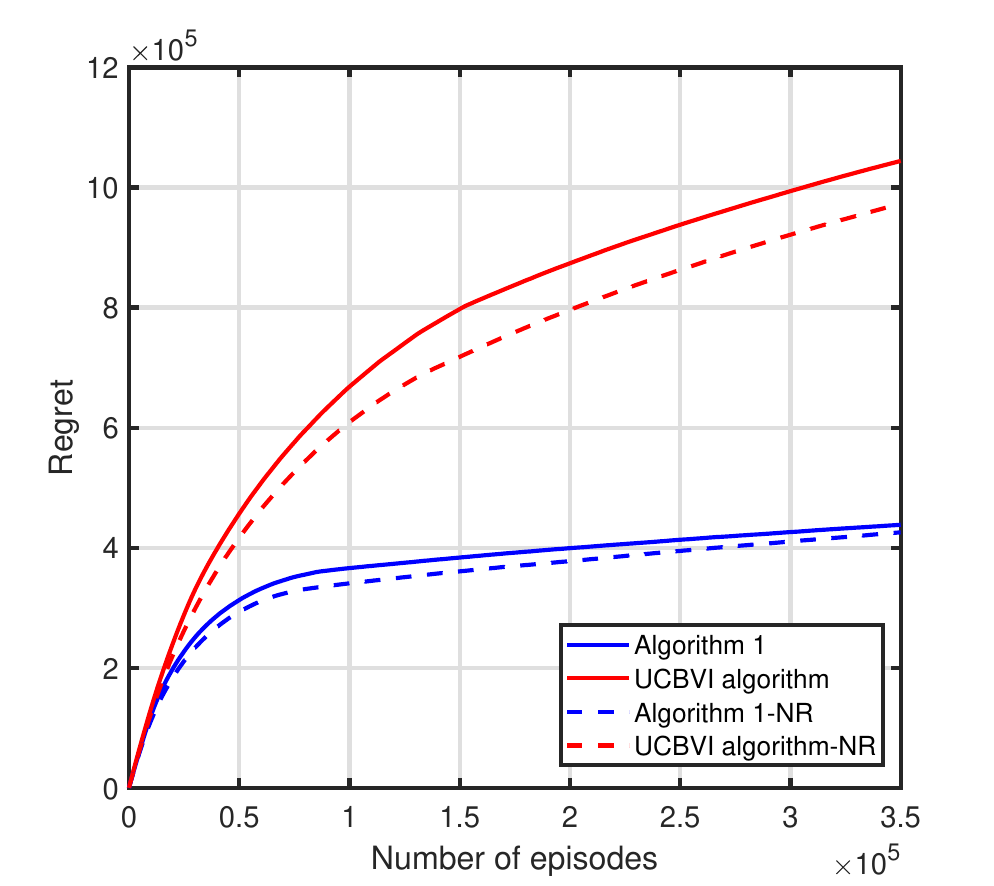}%
\label{f:h40}} \hfil
\subfloat[]{\includegraphics[width=0.33\textwidth]{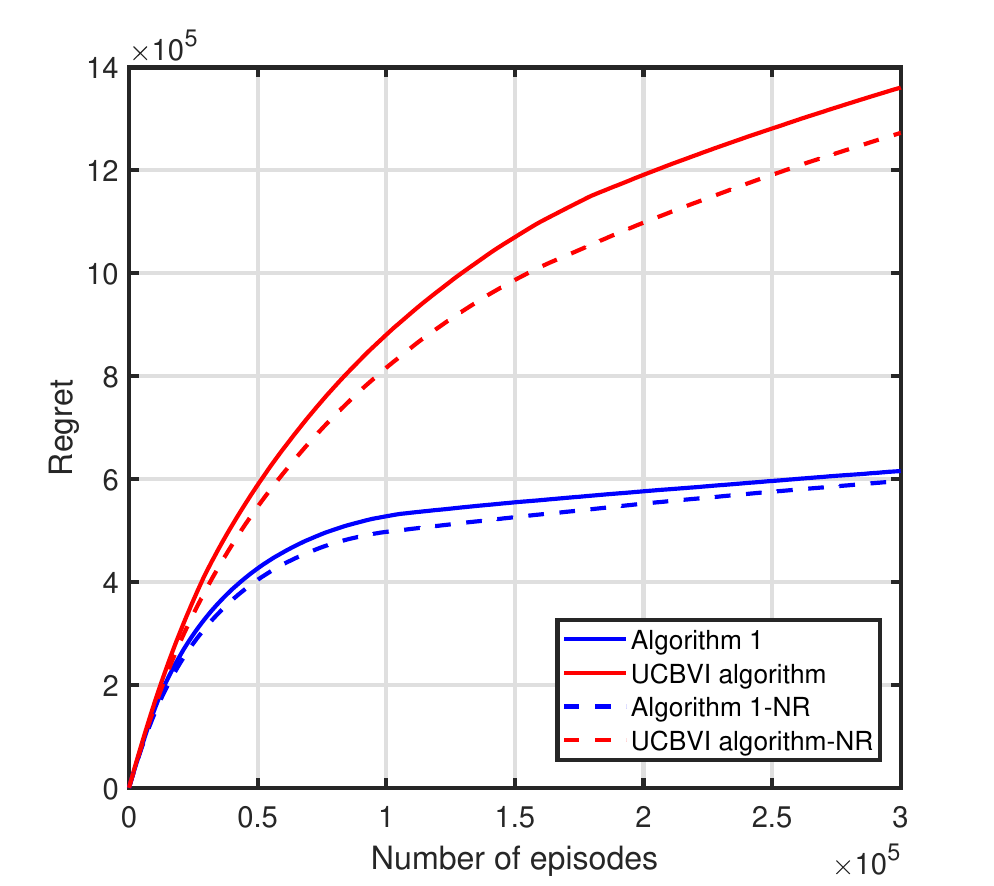}%
\label{f:h50}}}  \caption{Comparison between Algorithm \ref{Alg:withbonus} and the UCBVI algorithm. We use $A_{\rm m} = 10, P = 15, \alpha = 0.4$ and $C = 4$. The successful transmission probabilities over different channel are equally spaced from 0.2 to 0.8 (i.e., the probabilities are $\{0.2, 0.4, 0.6, 0.8\}$). We consider: i) $H=30$ in (a), ii) $H=40$ in (b), and iii) $H=50$ in (c).}\label{f:Algorithm1}
\end{figure*} 
The empirical performance of Algorithm \ref{Alg:withbonus} is compared to that of the UCBVI algorithm in Figs. \ref{f:Algorithm1} and \ref{f:importweight}. Note that the tight regret bounds in \cite[Theorems~1 and 2]{azar2017minimax} were obtained under the condition that $H \leq SA$, where $S$ and $A$ are the sizes of the state and action spaces, respectively. Thus, for fair comparison with the UCBVI algorithm, we consider values of $H$ that satisfy this condition. Also, we consider a similar exploration bonus term to that of the UCBVI algorithm (i.e., we focus on the performance of the early phase of Algorithm \ref{Alg:withbonus} in Figs. \ref{f:Algorithm1} and \ref{f:importweight}). While the initial state $s_{k,1}$ may change arbitrarily from one episode to the next \cite{azar2017minimax}, the curves with the abbreviation ``NR'' refer to the specific case where the initial state (or the initial AoI value) of each episode is set to be the AoI value at the end of its preceding episode. On the other hand, the initial state $s_{k,1}$ is chosen uniformly at random in the curves without the abbreviation ``NR''. 

A couple of key observations can be noticed from Fig. \ref{f:Algorithm1}. First, Algorithm \ref{Alg:withbonus} significantly outperforms the UCBVI algorithm in terms of the achievable regret. This demonstrates/quantifies the significant impact of the exploration opportunities created through sending pilot signals (when the channels are idle, i.e., the action is to drop the generated status update) on the achievable regret performance. Second, the achievable regret by Algorithm \ref{Alg:withbonus} approaches a bounded regret value as the number of episodes $K$ increases. However, this convergence of the regret to a bounded value appears to be relatively slow, which is mainly due to the existence of the exploration bonus term. This motivates us to develop a variant of Algorithm \ref{Alg:withbonus} with a similar provable theoretical guarantee (in terms of achieving a bounded regret with respect to $K$), but with a much better empirical performance (in terms of the fast convergence of the regret to a bounded value even when the sizes of state and action spaces are quite large). It can also be observed from Fig. \ref{f:importweight} that as the importance weight of AoI $\alpha$ increases, the gap between the achievable regrets by Algorithm \ref{Alg:withbonus} and the UCBVI algorithm slightly decreases (since the exploration opportunities of sending pilot signals become less).  

\begin{figure}[t!]
\centering
\includegraphics[width=0.7\columnwidth]{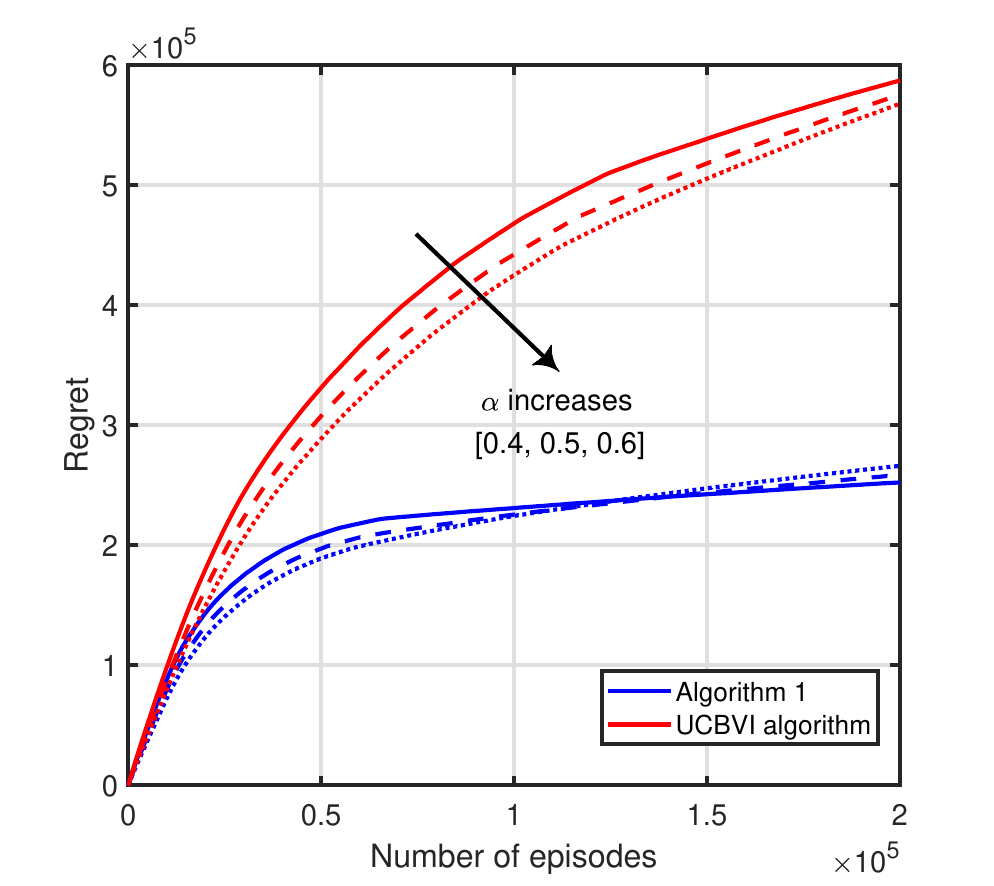}
\caption{Impact of the importance weight of AoI $\alpha$ on the achievable regret by Algorithm \ref{Alg:withbonus}. We use $H = 30$. Other parameters are same as Fig. \ref{f:Algorithm1}.}
\label{f:importweight}
\end{figure}

\subsection{An Order-Optimal Learning Algorithm without Exploration Bonus}
To overcome the limitation in the empirical performance of Algorithm \ref{Alg:withbonus} (related to the slow convergence of the regret to a bounded value), we develop another AoI-aware order-optimal learning algorithm (referred to as Algorithm 2) with a significantly better empirical performance. Different from Algorithm \ref{Alg:withbonus}, our second order-optimal learning algorithm eliminates the exploration bonus term from the Q-values. To evaluate the policy to be executed within episode $k$, this algorithm just feeds the transition probabilities estimated prior to episode $k$ into the backward induction algorithm. Specifically, the complete description of our second order-optimal learning algorithm is similar to Algorithm \ref{Alg:withbonus} with the only difference that the Q-values are evaluated as: $Q_{k,h}(s,a) = r(s,a)+\E_{\hat{\P}_k^{'}}[V_{k,h+1}(s')|s,a]$. The achievable regret by Algorithm 2 is stated in the following Theorem.
\begin{theorem}\label{theorem2}
The regret of Algorithm 2 is upper bounded as follows:
\begin{align}\label{Algo2_regret}
& R(T) \nonumber \\
& \leq O\left( H^2 A_{\rm m} P C \left[ \frac{3}{\frac{1}{2C^2}-1} + \frac{2}{\frac{\delta_{min}^2}{4C}-1} + \frac{1}{\frac{\Delta_{min}^2}{4C}-1} \right] \right),
\end{align}
where $\delta_{min} = \min_{i} \left| \frac{1-\alpha}{\alpha A_{\rm m}}P - \mu_i \right|$ and $\Delta_{min} = \min_{i} \left| \mu_{k^*} - \mu_i \right|$.
\end{theorem}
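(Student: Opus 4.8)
\textit{Proof Plan for Theorem \ref{theorem2}.} The plan is to decompose the total regret into a sum over episodes and to show that, with high probability, the bonus-free backward induction in Algorithm 2 produces exactly the optimal finite-horizon policy $\pi^*$ as soon as the channel estimates $\{\hat{T}_k(a)\}$ are accurate enough; the regret then collapses onto the geometrically few early episodes in which the estimates are still poor. Let $\mathcal{G}_k$ be the event that, before episode $k$, $\max_i|\hat\mu_{k,i}-\mu_i|$ is below a margin $m(\delta_{min},\Delta_{min})$ to be specified. First I would argue, by the same monotonicity/backward-induction argument as in the proofs of Lemmas \ref{Lemma:2} and \ref{Lemma:3} but carried out over the $h$-indexed iterates $V_{k,h}$ instead of the VIA iterates, that for any fixed transition estimate the functions $V_{k,h}(\cdot)$ are non-decreasing in $s$ and that the greedy action at each step $h$ lies in $\{0,\hat{k}^*_k\}$ with a threshold structure, where $\hat{k}^*_k$ is the empirically best channel before episode $k$. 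Thus the episode-$k$ policy is pinned down by (i) the identity of $\hat{k}^*_k$ and (ii) a family of per-step thresholds, each fixed by comparing $\hat\mu$-dependent quantities against the (normalized) transmission cost; a robustness argument on these discrete comparisons then shows that on $\mathcal{G}_k$ the episode-$k$ policy equals $\pi^*$, so that episode contributes zero expected regret (its trajectory law matches that of $\pi^*$).

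Next I would upper bound $\P(\mathcal{G}_k^c)$. The key point, inherited from the pilot mechanism already exploited in Theorem \ref{theorem1}, is that every episode yields at least one fresh channel observation (a pilot over a uniformly random channel whenever the action is to drop, or a genuine transmission otherwise); hence after $k-1$ episodes each channel has been sampled $\Omega(k/C)$ times with probability $1-\exp(-\Omega(k/C))$ by a Chernoff bound on the uniform pilot assignment. Conditioned on that sub-event, Hoeffding's inequality gives $\P\{|\hat\mu_{k,i}-\mu_i|>\varepsilon\}\le\exp(-\Omega(\varepsilon^2 k/C))$; a union bound over the $C$ channels, the $H$ decisions within an episode, and the three failure modes --- insufficient sampling, misidentifying the send-versus-drop threshold at margin $\delta_{min}$, and misidentifying the best channel at margin $\Delta_{min}$ (the last via \cite[Lemma 12]{atay2021aging}, exactly as in the proof of Theorem \ref{theorem1}) --- yields
\begin{align}
\P(\mathcal{G}_k^c) \le O\!\left( H C \Big[ e^{-k/(2C^2)} + e^{-\delta_{min}^2 k/(4C)} + e^{-\Delta_{min}^2 k/(4C)} \Big] \right).
\end{align}

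Finally I would combine the two ingredients. Each good episode contributes zero expected regret, while each bad episode contributes at most $O(H\bar R)$ with $\bar R=\alpha(A_{\rm m}-1)+(1-\alpha)P$ (the episode length times the range of the un-normalized per-step cost, equivalently the range of the un-normalized value function). Summing over episodes,
\begin{align}
R(T) \le O(H\bar R)\sum_{k=1}^{\infty}\P(\mathcal{G}_k^c) \le O\!\left( H^2 A_{\rm m} P C \Big[ \tfrac{3}{\frac{1}{2C^2}-1} + \tfrac{2}{\frac{\delta_{min}^2}{4C}-1} + \tfrac{1}{\frac{\Delta_{min}^2}{4C}-1} \Big] \right),
\end{align}
where the closed forms bound the geometric sums $\sum_k e^{-ck}$ and $\bar R=\Theta(A_{\rm m}P)$ up to the weights $\alpha,1-\alpha$; crucially the bound does not grow with $T$ (or with $K$). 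The main obstacle is the first paragraph: making the implication ``accurate estimates $\Rightarrow$ greedy policy equals $\pi^*$'' fully rigorous in the finite-horizon setting --- one must transfer the stationary structural results of Section \ref{sec:known} to the $h$-indexed backward-induction iterates and compute the explicit tolerance $m(\delta_{min},\Delta_{min})$ on $\|\hat\mu_k-\mu\|_\infty$ that guarantees the discrete $\arg\max$ over actions is unchanged at every state and step. A secondary technical point is lower-bounding the per-channel observation counts uniformly over the estimate-dependent (hence random) sequence of drop decisions that triggers the pilots.
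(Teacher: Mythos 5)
Your proposal is correct and follows essentially the same route as the paper's proof: both rest on the fact that the pilot signals guarantee fresh channel samples every episode, so Hoeffding's inequality (with the margins $\delta_{min}$ for the send-versus-drop decision and $\Delta_{min}$ for best-channel identification, the latter via \cite[Lemma 12]{atay2021aging}) makes the probability of deviating from $\pi^*$ decay exponentially in $k$, after which a worst-case per-episode regret of order $H A_{\rm m}$ and $HP$ and a geometric sum over $k$ give the $O(1)$ bound. The only differences are organizational --- you condition on an episode-level good event and use a single range $\bar R$, whereas the paper conditions per time step on $a_{k,h}^*=a_{k,h}^{\pi}$ and splits the cost into its power and AoI components --- and the gap you flag (rigorously transferring the threshold structure to the finite-horizon iterates to show the greedy policy matches $\pi^*$ under accurate estimates) is likewise left implicit in the paper's own sketch.
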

\begin{remark}\label{rem:3}Theorem \ref{theorem2} shows that the regret of our order-optimal learning algorithm without exploration bonus is $O(1)$, especially that it does not increase with the total time horizon $T$ (or equivalently, the number of episodes $K$). It is worth noting that \cite{atay2021aging} developed a learning algorithm without exploration bonus that achieves a bounded regret with respect to $T$ (i.e., $O(1)$) when the power cost of sending status updates are ignored and the status updates arrive at the source nodes according to a Bernoulli random process. Unlike the regret analysis in \cite{atay2021aging}, accounting for the power costs of sending status updates in this paper introduces several technical challenges to the regret analysis. First, the regret due to the time steps in which the learning algorithm mistakenly sends the generated status update or drops it needs to be carefully quantified. Second, in \cite{atay2021aging}, whenever the optimal channel is chosen by the learning algorithm, the regret is $0$. However, this does not hold in our setting where we consider the power cost as well. Third, different from \cite{atay2021aging} where the only reason that the optimal action is not chosen is channel estimation bias, an additional reason in our setting could be saving the power cost $P$. 
\end{remark}

Due to space limitations, we will provide next a proof sketch of Theorem \ref{theorem2}.

\textit{Proof Sketch of Theorem \ref{theorem2}}: Recall that the total cost at time step $t$ contains two parts, the AoI $s(t)$ and the power cost $P(t)$. We analyze each of them separately. First, the power regret is
\begin{align}
(1-\alpha) \sum_{k=1}^{K} \E\left[ \sum_{h=1}^{H} \left[ P(a_{k,h}^{\pi}) - P(a_{k,h}^*) \right] \right]. \label{eq2-1}
\end{align}
According to the law of total expectation, we have
\begin{align}
& \E\left[ \sum_{h=1}^{H} \left[ P(a_{k,h}^{\pi}) - P(a_{k,h}^*) \right] \right] \nonumber \\
& = \sum_{h=1}^{H} \E\left[ P(a_{k,h}^{\pi}) - P(a_{k,h}^*)  | a_{k,h}^* = a_{k,h}^{\pi} \right] \cdot \P(a_{k,h}^* = a_{k,h}^{\pi})\nonumber \\
& + \sum_{h=1}^{H} \E\left[ P(a_{k,h}^{\pi}) - P(a_{k,h}^*) | a_{k,h}^* \neq a_{k,h}^{\pi} \right] \cdot \P(a_{k,h}^* \neq a_{k,h}^{\pi}).
\end{align}

Note that when $a_{k,h}^* = a_{k,h}^{\pi}$, the power regret is $0$, i.e., 
\begin{align*}
   \E\left[ P(a_{k,h}^{\pi}) - P(a_{k,h}^*)  | a_{k,h}^* = a_{k,h}^{\pi} \right] = 0 
\end{align*}. 

Therefore, we focus on the case when $a_{k,h}^* \neq a_{k,h}^{\pi}$. Since sending a status update over any of the channels has the same power cost $P$, it is sufficient to focus on the time steps where either our learning algorithm or the optimal policy drops the generated status update. Thus, we have,
\begin{align}
& \E\left[ \sum_{h=1}^{H} \left[ P(a_{k,h}^{\pi}) - P(a_{k,h}^*) \right] \right] \leq H P \nonumber \\
& \cdot \P\left( \exists h, s.t., \{ a_{k,h}^* = 0, a_{k,h}^{\pi} > 0 \} \cup \{ a_{k,h}^* > 0, a_{k,h}^{\pi} = 0 \} \right).
\end{align}

When the estimation error $\max_{i}\{ \hat{\mu}_i - \mu_i \}$ of the transition probability, i.e., the channel reliability, is larger than the gap between the weighted AoI and power costs, the algorithm will mistakenly send the status update (instead of dropping the update and sending a pilot signal). Thus, according to Hoeffding's inequality, we have
\begin{align}
& \E\left[ \sum_{h=1}^{H} \left[ P(a_{k,h}^{\pi}) - P(a_{k,h}^*) \right] \right] ,\nonumber \\
& \leq 2HP C \cdot \left[ \exp{\left(-\frac{1}{2C^2}k\right)} + \exp{\left( - \frac{\delta_{min}^2}{4C}k \right)} \right],
\end{align}
where $\delta_{min} = \min_{i} \left| \frac{1-\alpha}{\alpha A_{\rm m}}P - \mu_i \right|$. By summing over the episode index $k$, we get the term depending on $\delta_{min}$ in the final regret.

Second, the AoI regret is
\begin{align}
\alpha \sum_{k=1}^{K} \E\left[ \sum_{h=1}^{H} s_{k,h}^{\pi} - \sum_{h=1}^{H} s_{k,h}^* \right].
\end{align}

Similarly, according to the law of total expectation, we have
\begin{align}
& \E\left[\sum_{h=1}^{H} s_{k,h}^{\pi} - \sum_{h=1}^{H} s_{k,h}^* \right] \nonumber \\
& = \sum_{h=1}^{H} \E\left[ s_{k,h}^{\pi} - s_{k,h}^* | a_{k,h}^* = a_{k,h}^{\pi} \right] \cdot \P(a_{k,h}^* = a_{k,h}^{\pi})\nonumber \\
& + \sum_{h=1}^{H} \E\left[ s_{k,h}^{\pi} - s_{k,h}^* | a_{k,h}^* \neq a_{k,h}^{\pi} \right] \cdot \P(a_{k,h}^* \neq a_{k,h}^{\pi}).
\end{align}

There are two scenarios in which we may have $a_{k,h}^* \neq a_{k,h}^{\pi}$. First, the learning algorithm (or the policy $\pi$) may mistakenly drop the generated status update to save the power cost $P$. This occurs with a probability that can be upper bounded in a way similar to what we discussed above, i.e., $2HPC \cdot \left[ \exp{\left(-\frac{1}{2C^2}k\right)} + \exp{\left( - \frac{\delta_{min}^2}{4C}k \right)} \right]$. Second, if the channel estimation error is too large, the policy $\pi$  may mistakenly choose a suboptimal channel to send a status update over. According to Hoeffding's inequality and \cite[Lemma 12]{atay2021aging}, the probability of this second scenario can be upper bounded by $2C \cdot \left[ \exp{\left(-\frac{1}{2C^2}k\right)} + \exp{\left( - \frac{\Delta_{min}^2}{4C}k \right)} \right]$. Moreover, since $s_{k,h} \leq A_{\rm m}$, we have that 
\begin{align}
\sum_{h=1}^{H} E\left[ s_{k,h}^{\pi} - s_{k,h}^* | a_{k,h}^* \neq a_{k,h}^{\pi} \right] \leq H A_{\rm m}. \label{eq2-0}
\end{align}

Since sending a status update over any of the channels has the same power cost, and both the optimal policy and $\pi$ start from the same initial state/AoI in each episode, if the event $\{ a_{k,h}^* = 0, a_{k,h}^{\pi} > 0 \} \cup \{ a_{k,h}^* > 0, a_{k,h}^{\pi} = 0 \}$ and the event $\{a_{k,h}^* \neq a_{k,h}^{\pi}, a_{k,h}^* > 0, a_{k,h}^{\pi}> 0\}$ do not occur in an episode, we have $\E\left[ s_{k,h}^{\pi} - s_{k,h}^* | a_{k,h}^* = a_{k,h}^{\pi} \right] = 0$. The final regret in (\ref{Algo2_regret}) follows by combining (\ref{eq2-1})-(\ref{eq2-0}) and taking the sum over the episode index $k$.
\hfill 
\qed

\begin{figure}
\centering
\includegraphics[width=0.67\columnwidth]{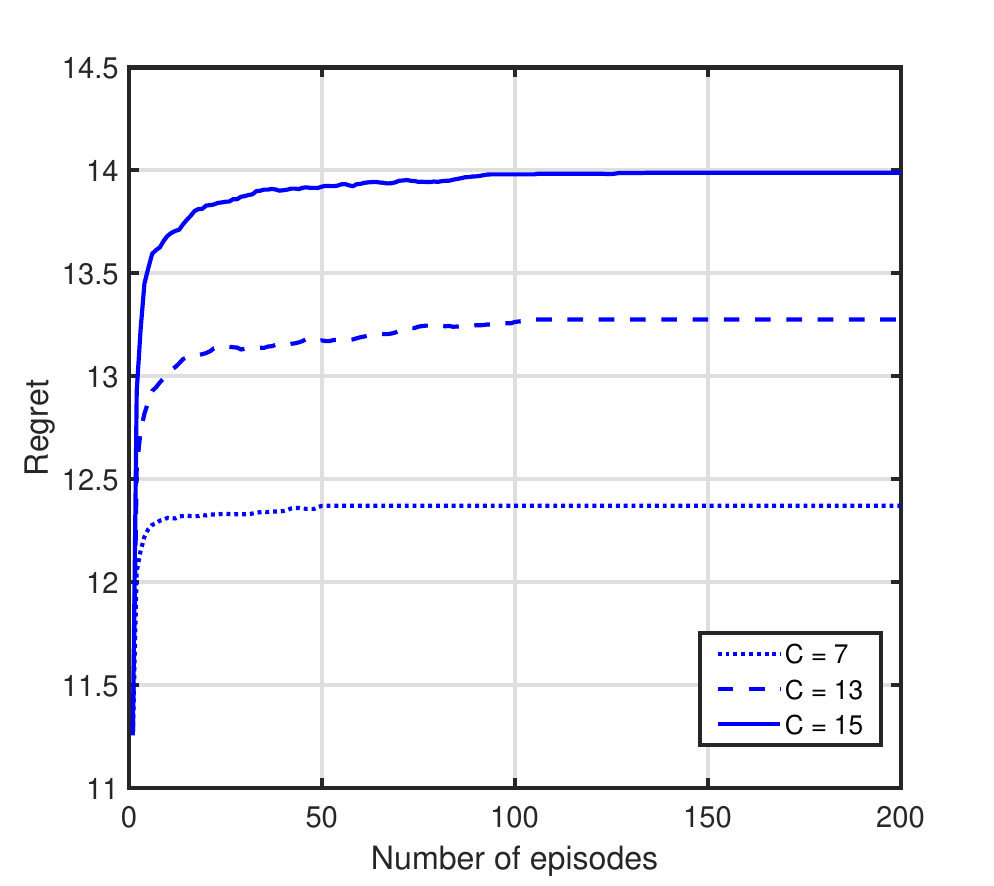}
\caption{Empirical performance of Algorithm 2. We use $A_{\rm m} = 100, P = 2, \alpha = 0.5$ and $H = 50$. The successful transmission probabilities over different channel are equally spaced from 0.2 to 0.8.}
\label{f:Algorithm2}
\end{figure}
\begin{figure}
\centering
\includegraphics[width=0.67\columnwidth]{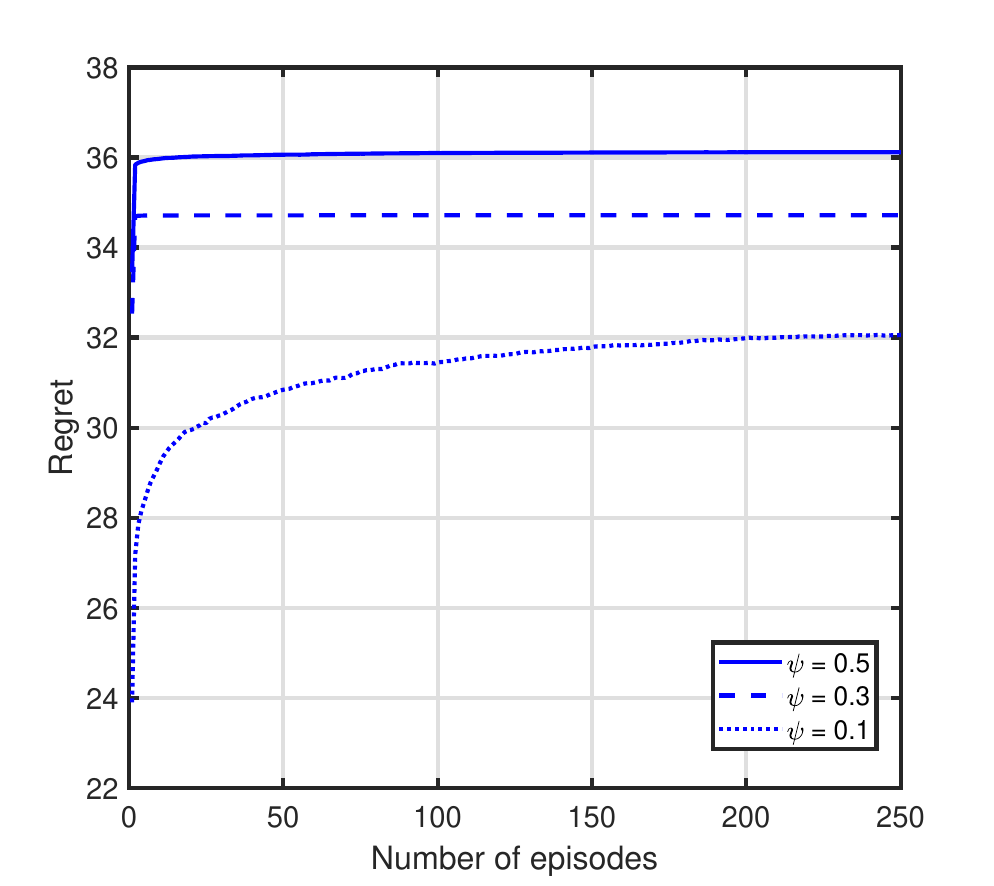}
\caption{Empirical performance of Algorithm 2 for a non-linear age function $\mathcal{F} (A(t)) = \exp{\left(\psi A(t)\right)}$. We use $A_{\rm m} = 20, C = 15, P = 2, \alpha = 0.5$ and $H = 50$. The successful transmission probabilities over different channel are equally spaced from 0.2 to 0.8.}
\label{f:Algorithm2_b}
\end{figure}
\begin{figure}
\centering
\includegraphics[width=0.7\columnwidth]{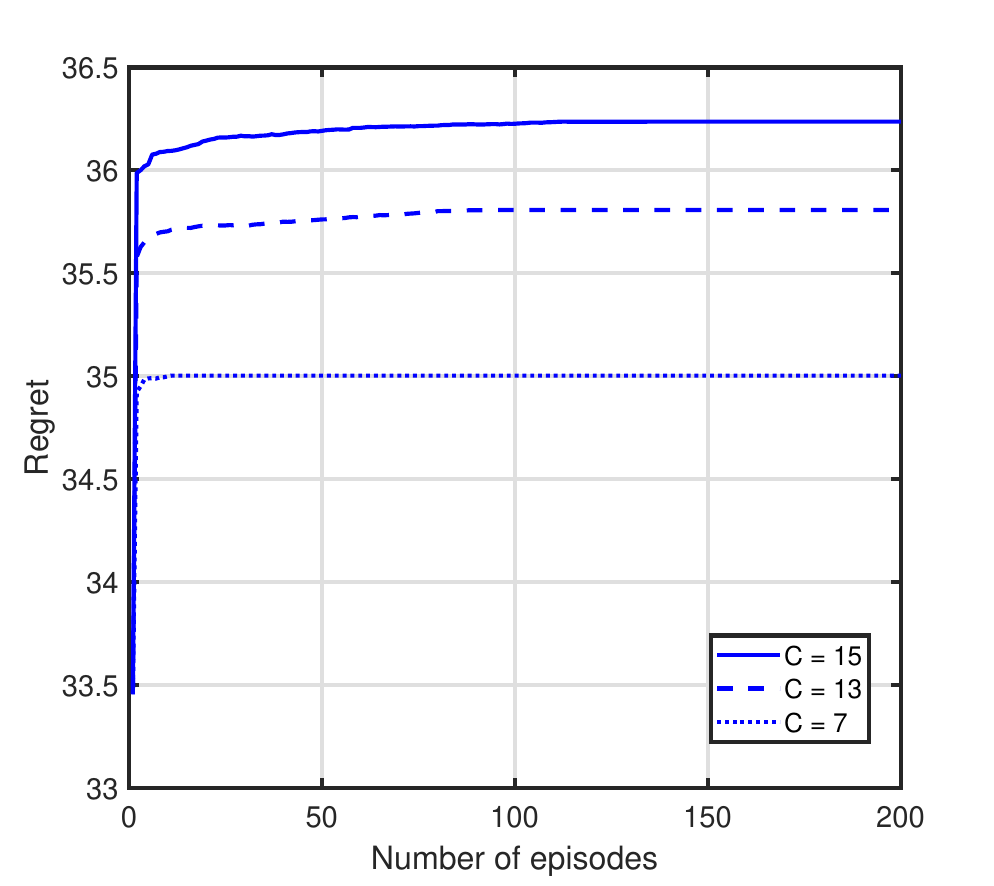}
\caption{Empirical performance of Algorithm 2 for a non-linear age function $\mathcal{F} (A(t)) = \exp{\left(\psi A(t)\right)}$. We use $A_{\rm m} = 20, \psi = 0.3, P = 2, \alpha = 0.5$ and $H = 50$. The successful transmission probabilities over different channel are equally spaced from 0.2 to 0.8.}
\label{f:Algorithm2_c}
\end{figure}
The empirical performance of Algorithm 2 is shown in Fig. \ref{f:Algorithm2}. It can be observed that the regret of Algorithm 2 (without exploration bonus) converges to a small bounded value very quickly even when the sizes of state and action spaces are relatively large. Further, it can be noticed from Fig. \ref{f:Algorithm2} that the bounded value of the regret slightly increases with the increase in the size of the action space (or equivalently, the number of channels $C$). In addition, the empirical performance of Algorithm 2 for a non-linear age function $\mathcal{F} (A(t)) = \exp{\left(\psi A(t)\right)}$ is shown in Figs. \ref{f:Algorithm2_b} and \ref{f:Algorithm2_c}. Similarly, the regret quickly converges to a small bounded value, and it can be noticed that this bounded value increases with either the rate $\psi$ of the exponential age function (Fig. \ref{f:Algorithm2_b}) or the size of action space $C$ (Fig. \ref{f:Algorithm2_c}).

\section{Conclusion}
This paper proposed novel AoI-aware online learning-based algorithms for optimizing the fundamental AoI-energy tradeoff under unknown channel statistics. In particular, we considered a system setting in which an energy-constrained source node is connected to a destination node through a set of channels, where the channel statistics were assumed to be unknown to the scheduler. For this setting, the optimal policy (that knows the channel statistics a priori) for the infinite-time average-cost problem was first proven to have a threshold-based structure with respect to the value of AoI. We then utilized this key insight to develop AoI-aware learning algorithms with a provable order-optimal regret performance for the finite time horizon model under consideration. In particular, our proposed learning algorithms with (Algorithm 1) and without (Algorithm 2) an exploration bonus were proven to surprisingly have a bounded regret performance with respect to the time horizon length (i.e., $O(1)$). 

Several system design insights were drawn from our simulation results. For instance, our results quantified the significant improvement of our proposed learning algorithms over the UCBVI algorithm in terms of the achievable regret performance. They also revealed that compared to Algorithm 1, Algorithm 2 has a much better empirical performance in terms of the fast convergence of the regret to a bounded value even when the sizes of state and action spaces are quite large. The results also showed that the bounded value of the achievable regret by Algorithm 2 slightly increases with the increase in the number of channels. 

An interesting extension of this work is to investigate the possibility of developing AoI-aware order-optimal learning algorithms for the multi-source system setting in which each source is associated with an AoI process. The study of this multi-source setting adds another layer of complexity to the analysis related to scheduling the status update transmissions from different sources. It would also be interesting to extend the proposed algorithms in this paper to account for the possibility of having: i) stochastic status update arrivals at the source node(s) \cite{atay2021aging}, and ii) time-varying unknown cost functions of AoI \cite{tripathi2021online}.

\section*{Acknowledgments}
This work has been supported in part by the Army Research Laboratory and was accomplished under Cooperative Agreement Number W911NF-23-2-0225, and by the U.S. National Science Foundation under the grants: NSF AI Institute (AI-EDGE) 2112471, CNS-2312836, CNS-2225561, and CNS-2239677. Additionally, this research was supported by the Office of Naval Research under grant N00014-24-1-2729. The views and conclusions contained in this document are those of the authors and should not be interpreted as representing the official policies, either expressed or implied, of the Army Research Laboratory or the U.S. Government. The U.S. Government is authorized to reproduce and distribute reprints for Government purposes notwithstanding any copyright notation herein.
% Identification of funding sources and other support, and thanks to
% individuals and groups that assisted in the research and the
% preparation of the work should be included in an acknowledgment
% section, which is placed just before the reference section in your
% document.

% This section has a special environment:
% \begin{verbatim}
%   \begin{acks}
%   ...
%   \end{acks}
% \end{verbatim}
% so that the information contained therein can be more easily collected
% during the article metadata extraction phase, and to ensure
% consistency in the spelling of the section heading.

% Authors should not prepare this section as a numbered or unnumbered {\verb|\section|}; please use the ``{\verb|acks|}'' environment.

% \section{Appendices}

% If your work needs an appendix, add it before the
% ``\verb|\end{document}|'' command at the conclusion of your source
% document.

% Start the appendix with the ``\verb|appendix|'' command:
% \begin{verbatim}
%   \appendix
% \end{verbatim}
% and note that in the appendix, sections are lettered, not
% numbered. This document has two appendices, demonstrating the section
% and subsection identification method.

% %%
% %% The acknowledgments section is defined using the "acks" environment
% %% (and NOT an unnumbered section). This ensures the proper
% %% identification of the section in the article metadata, and the
% %% consistent spelling of the heading.
% \begin{acks}
% To Robert, for the bagels and explaining CMYK and color spaces.
% \end{acks}

%%
%% The next two lines define the bibliography style to be used, and
%% the bibliography file.

\bibliographystyle{ACM-Reference-Format}
\bibliography{ref}

% \printbibliography

%%
%% If your work has an appendix, this is the place to put it.
% \appendix

% \section{Research Methods}

% \subsection{Part One}

% Lorem ipsum dolor sit amet, consectetur adipiscing elit. Morbi
% malesuada, quam in pulvinar varius, metus nunc fermentum urna, id
% sollicitudin purus odio sit amet enim. Aliquam ullamcorper eu ipsum
% vel mollis. Curabitur quis dictum nisl. Phasellus vel semper risus, et
% lacinia dolor. Integer ultricies commodo sem nec semper.

% \subsection{Part Two}

% Etiam commodo feugiat nisl pulvinar pellentesque. Etiam auctor sodales
% ligula, non varius nibh pulvinar semper. Suspendisse nec lectus non
% ipsum convallis congue hendrerit vitae sapien. Donec at laoreet
% eros. Vivamus non purus placerat, scelerisque diam eu, cursus
% ante. Etiam aliquam tortor auctor efficitur mattis.

% \section{Online Resources}

% Nam id fermentum dui. Suspendisse sagittis tortor a nulla mollis, in
% pulvinar ex pretium. Sed interdum orci quis metus euismod, et sagittis
% enim maximus. Vestibulum gravida massa ut felis suscipit
% congue. Quisque mattis elit a risus ultrices commodo venenatis eget
% dui. Etiam sagittis eleifend elementum.

% Nam interdum magna at lectus dignissim, ac dignissim lorem
% rhoncus. Maecenas eu arcu ac neque placerat aliquam. Nunc pulvinar
% massa et mattis lacinia.

\end{document}